\documentclass[runningheads,envcountsame]{llncs}
\usepackage{common/prelude}
\usepackage[protrusion=true,expansion=false]{microtype}
\usepackage{hyperref}%
\usepackage{cleveref}
\usepackage{doi}%
\usepackage{nicefrac}
\usepackage{standalone}
\usepackage{xcolor}
\usepackage{makecell} 
\usepackage{tabularx} 
\usepackage{bbm}

\setcellgapes{2pt}
\usetikzlibrary{backgrounds,positioning,arrows.meta,calc}

\undef\todo
\usepackage{verbatim}
\usepackage{soul} 
\soulregister\cite7
\usepackage[commandnameprefix=ifneeded,final]{changes}
\definechangesauthor[color=orange]{yan}
\definechangesauthor[color=red]{cris}
\sethighlightmarkup{\IfIsColored{{\sethlcolor{authorcolor!30}\hl{#1}}}{#1}}

\title{Sub-polynomial parameterized complexity of~\texorpdfstring{$k$-core}{k-core}}
\author{Yan S. Couto\inst{1}\orcidID{0009-0005-5850-8696} \and
Cristina G. Fernandes\inst{1}\orcidID{0000-0002-5259-2859}}
\authorrunning{Y.\,S. Couto and C.\,G. Fernandes}
\institute{University of São Paulo, São Paulo, SP, Brazil\\
\email{\{yancouto,cris\}@ime.usp.br}}
\makeatletter
\newcommand{\setciteauthor}[2]{\@namedef{manualcite@#1}{#2}}

\NewDocumentCommand{\citeauthor}{m}{%
  \ifcsname manualcite@#1\endcsname
    \@nameuse{manualcite@#1}%
  \else
    \@latex@warning{Author undefined for cite key '#1'}%
    \textcolor{red}{\textbf{[Missing Author]}}%
  \fi
}

\NewDocumentCommand{\citet}{o m}{%
  \citeauthor{#2}~%
  \IfValueTF{#1}{%
    \cite[#1]{#2}%
  }{%
    \cite{#2}%
  }%
}
\makeatother

\setciteauthor{anderson_p-complete_1984}{Anderson and Mayr}
\setciteauthor{courcelle_graph_1990}{Courcelle}
\setciteauthor{bannach_parallel_2017}{Bannach and Tantau}
\setciteauthor{vollmer_introduction_1999}{Vollmer}
\setciteauthor{elberfeld_logspace_2010}{Elberfeld, Jakoby and Tantau}
\setciteauthor{elberfeld_space_2015}{Elberfeld, Stockhusen and Tantau}
\setciteauthor{etessami_counting_1997}{Etessami}
\setciteauthor{couto_hardness_2025_full}{Couto and Fernandes}
\setciteauthor{wang_fast_2023}{Wang et al.}
\setciteauthor{bodlaender_contraction_2006}{Bodlaender, Wolle and Koster}

\renewcommand{\problem}[1]{\textup{\textsc{#1}}\xspace}
\newcommand{\kcore}{$k$-core\xspace}

\newcommand{\Kcore}{\problem{$k$-core}}
\newcommand{\NEkcore}{\problem{NE-$k$-core}}
\newcommand{\Tcore}{\problem{$3$-core}}

\newcommand{\ORD}{\problem{ORD}}
\newcommand{\pA}{\problem{A}}

\newcommand{\cclass}[1]{\textbf{\textup{#1}}\xspace}
\newcommand{\para}[1]{\cclass{\text{para-}#1}}
\newcommand{\NP}{\cclass{NP}}
\renewcommand{\P}{\cclass{P}}
\newcommand{\NC}{\cclass{NC}}
\newcommand{\FNC}{\cclass{FNC}}
\renewcommand{\L}{\cclass{L}}
\newcommand{\NL}{\cclass{NL}}
\newcommand{\XL}{\cclass{XL}}
\newcommand{\paraNC}{\para{\NC}}
\newcommand{\paraNCeps}{\para{\ensuremath{\NC^{2+\epsilon}}}}
\newcommand{\p}{\mathrm{p}}

\renewcommand{\N}{\mathbb{N}}

\newcommand{\CRCW}{\texttt{CRCW}\xspace}

\newcommand{\PRAM}{\texttt{PRAM}\xspace}
\newcommand{\MSO}{MSO\xspace}

\DeclareMathOperator{\tw}{tw}
\DeclareMathOperator{\pw}{pw}

\renewcommand{\emptyset}{\varnothing}
\newcommand{\band}{\wedge}

\newcommand{\bigand}{\bigwedge}

\renewcommand{\mid}{\colon}

\begin{document}

\maketitle

\begin{abstract}
    The \kcore of a graph is its (unique) largest subgraph with minimum degree at least~$k$. For any~$k \geq 3$, deciding whether a given vertex belongs to the~\kcore is a~\P-complete problem, meaning that it is inherently sequential and highly unlikely to admit efficient parallel algorithms, even on graphs of maximum degree~$k+1$. This paper investigates alternative parameterizations of the \kcore problem to identify conditions under which it can be placed into sub-polynomial complexity classes. We prove that the problem is in $\paraNCeps$ when parameterized by treewidth, and in $\paraNC^3$ when parameterized by~$k$ on chordal graphs. Furthermore, we introduce a novel $\NC^{3}$ algorithm for interval graphs when~$k = \Oh(\lg v(G))$, which relies on an improved parameterization by pathwidth. Finally, we establish corresponding lower bounds, demonstrating that, even with these parameterizations, computing the~\kcore remains~\L-hard, meaning it requires at least logarithmic space. These findings explore the boundary of parallel tractability for the~\kcore problem by highlighting the graph parameters that make it inherently sequential.

\keywords{community search \and core decomposition \and parameterized complexity \and parallel algorithms \and interval and chordal graphs}
\end{abstract}

\section{Introduction}

The~\deff{\kcore} of a graph is its (unique) largest subgraph with minimum degree at least~$k$. See~\Cref{fig:kcore} for an example. The problem is a powerful tool for identifying densely connected regions of a graph, with applications across domains such as social network analysis, risk assessment, bioinformatics, and system robustness. See the surveys~\cite{malliaros_core_2020,du_core_2020} for a list of applications.

\begin{figure}[h]
    \centering
    \scalebox{0.7}{\begin{tikzpicture}[
    vertex/.style={circle, draw=black, fill=white, thick, inner sep=2pt, minimum size=7pt},
    core1/.style={fill=green!20, rounded corners=20pt},
    core2/.style={fill=blue!20, rounded corners=15pt},
    core3/.style={fill=red!20, rounded corners=12pt}
]

\node[vertex] (v1) at (0,0)   {H};
\node[vertex] (v2) at (2,0)   {I};
\node[vertex] (v3) at (0,2)   {F};
\node[vertex] (v4) at (2,2)   {G};

\node[vertex] (v0) at (3,1)   {D};
\node[vertex] (v5) at (4,0.5) {E};
\node[vertex] (v6) at (4,1.5) {C};

\node[vertex] (v7) at (5.5,1) {B};
\node[vertex] (v8) at (-1.5,1){A};

\pgfdeclarelayer{background}
\pgfsetlayers{background,main}

\begin{pgfonlayer}{background}
    \fill[core1] ($(v8.west)+(-0.4,0)$) 
              -- ($(v3.north)+(-0.2,0.4)$) 
              -- ($(v4.north)+(0.2,0.4)$) 
              -- ($(v6.north)+(0.4,0.4)$) 
              -- ($(v7.east)+(0.4,0)$) 
              -- ($(v5.south)+(0.4,-0.4)$) 
              -- ($(v2.south)+(0.2,-0.4)$)
              -- ($(v1.south)+(-0.2,-0.4)$)
              -- cycle;

    \fill[core2] ($(v1.south west)+(-0.3,-0.3)$) 
              -- ($(v2.south east)+(0.3,-0.3)$) 
              -- ($(v5.south east)+(0.3,-0.3)$) 
              -- ($(v6.north east)+(0.3,0.3)$) 
              -- ($(v4.north east)+(-0.3,0.3)$) 
              -- ($(v3.north west)+(-0.3,0.3)$) 
              -- cycle;

    \fill[core3] ($(v1.south west)+(-0.2,-0.2)$) 
              -- ($(v2.south east)+(0.2,-0.2)$) 
              -- ($(v4.north east)+(0.2,0.2)$) 
              -- ($(v3.north west)+(-0.2,0.2)$) 
              -- cycle;
\end{pgfonlayer}

\draw[thick] (v1) -- (v2);
\draw[thick] (v1) -- (v3);
\draw[thick] (v1) -- (v4);
\draw[thick] (v2) -- (v3);
\draw[thick] (v2) -- (v4);
\draw[thick] (v3) -- (v4);

\draw[thick] (v4) -- (v6);
\draw[thick] (v5) -- (v6);
\draw[thick] (v0) -- (v5);
\draw[thick] (v0) -- (v6);

\draw[thick] (v6) -- (v7);
\draw[thick] (v1) -- (v8);

\end{tikzpicture}}
    \caption{A graph with its \textcolor{green!80!black}{1-core}, \textcolor{blue}{2-core}, and \textcolor{red}{3-core} highlighted.}
    \label{fig:kcore}
\end{figure}

In the sequential setting,~\kcore can be solved efficiently in time proportional to the number of edges~\cite{batagelj_m_2003} by a \emph{peeling} algorithm, in which we repeatedly remove vertices of degree less than~$k$. The efficiency of this algorithm is why this problem attracts attention, because other solutions for densely connected regions of a graph, such as large cliques, are often~\NP-hard even to approximate~\cite{hastad_clique_1999}.

Due to the massive scale of modern networks, significant effort has been dedicated to finding efficient parallel~\cite{liu_parallel_2025} (using multiple cores) and dynamic~\cite{zhang_maintaining_2024} (when edges are added and removed) algorithms for~\kcore. However, the problem is known to be~\P-complete (under~\NC reductions)~\cite{anderson_p-complete_1984}, implying that, if it can be parallelized efficiently, then all other problems in~\P can as well, which is widely believed to be false~\cite[Chapter 5]{greenlaw_limits_1995}. Analogously, it has been shown that an efficient dynamic algorithm for \kcore is unlikely to exist~\cite{couto_hardness_2025}.
The proofs for hardness work for any fixed~$k \geq 3$ on graphs of maximum degree~$k+1$, which means even parameterizing by~$k$ or maximum degree does not make the problem tractable.

To bypass this theoretical bottleneck, we explore the problem through the lens of parameterized complexity. While the most well-known results in parameterized complexity deal with the~\cclass{FPT} (or \para{P}) class, problems like~\kcore are trivially in that class since they are in~\P. Recent works, however, have studied \emph{parallel} tractability~\cite{elberfeld_space_2015,bannach_parallel_2022}, placing problems in parameterized space or parallel classes such as~\para{L} or~\paraNC.
Inspired by these approaches, we study various parameterizations of~\kcore and what they uncover about which graph parameters dictate the sequential nature of the~\kcore problem.

\subsection{Contributions}

Our main contributions explore the limits of the~\kcore problem to identify the sources of its sequential nature, by providing upper and lower bounds for the problem. The results are summarized in~\Cref{tab:summary}. Problem~\problem{$\p_k$-$\pi$} is~$\pi$ parameterized by~$k$. The class~\para{C} is the parameterized version of the decision class~\cclass{C}, while~\cclass{XC} is its slice-wise parameterization (for example, \cclass{XP} allows~$\Oh(|x|^k)$ algorithms while~\para{P} does not). See~\Cref{sec:preliminaries} for the full definitions of problems and complexity classes.

\begin{table}[h]
    \centering
    \crefname{theorem}{Thm.}{Thms.}
    \Crefname{theorem}{Thm.}{Thms.}
    \crefname{corollary}{Cor.}{Cors.}
    \Crefname{corollary}{Cor.}{Cors.}
    \setlength{\tabcolsep}{2.5pt} 
    \resizebox{\textwidth}{!}{%
    \begin{tabular}{|c||c|c|}
        \hline
        Problem & Upper bound & Lower bound \\ \hline 
        \Kcore & \P (\citet{anderson_p-complete_1984}) & \P-hard (\citet{anderson_p-complete_1984}) \\
        \problem{$\p_{\tw}$-$k$-core} & \paraNCeps / \XL (\Cref{res:tw-kcore,res:x-kcore}) & \para{L}-hard (\Cref{res:kcore-para-l-hard}) \\
        \problem{$\p_k$-C-$k$-core} & $\paraNC^3$ (\Cref{res:chordal-nc})  & \para{L}-hard (\Cref{res:kcore-para-l-hard}) \\
        \problem{$\p_k$-I-$k$-core} & \paraNCeps / \XL (\Cref{res:chordal-nc,res:x-kcore})  & \para{L}-hard (\Cref{res:kcore-para-l-hard}) \\
        \problem{I-$\Oh(\lg v(G))$-core} & $\NC^{3}$ (\Cref{res:interval-nc}) & \L-hard (\Cref{res:kcore-l-hard-tw}) \\ \hline
    \end{tabular}}
    \vspace{1mm}
    \caption{Summary of the best-known bounds for the complexity of computing the~\kcore of a graph in general, or in chordal (C) or interval (I) graphs.}
    \label{tab:summary}
\end{table}

\subsection{Related work}

\paragraph{Parallel $k$-core decomposition.}
\citet{anderson_p-complete_1984} showed that the~\kcore problem is~\P-complete, and that every~$(2-\epsilon)$-approximation for it is also~\P-complete, for each~$\epsilon > 0$. In the same article, the authors also provided an~\NC algorithm for a~$(2+\epsilon)$-approximation, and this same technique has been explored for more practical parallel~\cite{liu_parallel_2024}, distributed~\cite{chan_distributed_2019}, and dynamic~\cite{sun_fully_2020} algorithms. There has been more work on practical exact parallel algorithms for~\kcore~\cite{liu_parallel_2025}, but they focus on decreasing contention and optimizing work-efficiency, and their worst-case time complexity remains linear.

\paragraph{Parameterized dense subgraph search.}
Finding a clique of size~$k$, parameterized by~$k$, is a canonical example of intractability: despite the straightforward~$\Oh(v(G)^k)$ algorithm, the problem remains~$\cclass{W}[1]$-complete~\cite{cygan_parameterized_2015}. Indeed, the original definition of~$\cclass{W}[1]$-hardness is formulated around parameterized clique. This has motivated a range of relaxations and alternative parameterizations. For example,~\citet{wang_fast_2023} considered parameterizations of the~$k$-plex problem, which weakens the clique degree requirement. The~\kcore problem is another relaxation of clique, and it is polynomial-time solvable. In this paper, we focus instead on its sub-polynomial complexity.

\paragraph{Parameterized sub-polynomial complexity.}
While the analysis of the space complexity of parameterized problems dates back to the 1990s~\cite{cai_advice_1997}, a more in-depth analysis of parallel tractability started only recently with~\citet{elberfeld_space_2015}. Since then, it has been used for tighter analysis of problems, like \problem{MaxSAT}~\cite{bannach_parallel_2022}, and many tools, such as meta-theorems for Monadic Second-Order logic~\cite{bannach_parallel_2017}, have been developed for parallel tractability.

\paragraph{Organization.}
\Cref{sec:preliminaries} provides notation and preliminaries. \Cref{sec:courcelles} details upper bounds for treewidth and chordal graphs, and \Cref{sec:pw-interval} presents the dynamic programming approach for path decompositions and interval graphs. \Cref{sec:lower-bounds} establishes lower bounds, and \Cref{sec:conclusions} concludes.

\section{Preliminaries} \label{sec:preliminaries}

We adopt the graph theory notation of Bondy and Murty~\cite{BM2008}. A graph~$G$ is a pair~$(V(G), E(G))$, where~$v(G) = |V(G)|$ and~$e(G) = |E(G)|$. The neighborhood of a vertex~$v$ is~$N_G(v)$ and its degree is~$d_G(v)$, where~$d_G(v) = |N_G(v)|$.
We also use the usual interval notation~$[a,b] = \{a, a+1, \ldots, b\}$ for integers~$a$ and $b$ with $a \leq b$, and~$[a] = [1, a]$.

A graph~$G$ is an \emph{interval graph} if its vertices correspond to intervals on the real line with edges representing overlapping intervals. A graph is \emph{chordal} if it contains no induced cycles of length more than three. Note that every interval graph is chordal.

\paragraph{Sub-polynomial and parallel classes.} \label{def:complexity-classes}
Let~$\Sigma$ be a finite alphabet. A language~$Q \subseteq \Sigma^*$ is a \deff{decision problem}.
\deff{Logspace} (\L) is the class of problems decidable in $\Oh(\lg |x|)$ space, and \NL is its nondeterministic variant. 
For a real~$i \geq 1$, $\NC^i$ consists of problems decidable by a uniform family of \NC-circuits\footnote{A circuit family is uniform if there exists a logspace algorithm that, given~$1^{|x|}$, produces a circuit with input size~$|x|$.} (\mbox{\textup{\textsc{AND}}-}, \textup{\textsc{OR}}-, and \textup{\textsc{NOT}}-gates of bounded fan-in) with~$\Oh(\lg^i |x|)$ depth and polynomial size; $\NC = \bigcup_{i \geq 1} \NC^i$.
The prefix~\cclass{F} denotes the corresponding function class.

It is well-known that~$\NC^1 \subseteq \L \subseteq \NL \subseteq \NC^2 \subseteq \NC$~\cite[Theorem 16.1]{papadimitriou_computational_1994}. An (equivalent) alternative definition of~\NC is the class of problems that can be decided by a \PRAM (parallel random access machine) in polylogarithmic time on polynomially many processors. For simplicity, we consider only the \PRAM model with concurrent reads and writes (\CRCW-\PRAM). The following useful result allows us to translate results in the \PRAM model back into~\NC subclasses.

\begin{lemma}[{\citet[Theorem 2.56]{vollmer_introduction_1999}}] \label{lem:pram-to-nc}
    If a problem $Q$ admits a \CRCW-\PRAM solution running on input~$x$ in~$\Oh(\lg^i |x|)$ time on polynomially many processors, then~$Q \in \NC^{i+1}$.
\end{lemma}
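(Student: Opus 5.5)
The plan is to simulate the \CRCW-\PRAM by a uniform circuit family. A machine with polynomially many processors and polynomial memory is encoded step by step: one circuit layer per \PRAM step. The whole construction therefore reduces to building a single layer, of polynomial size and depth $\Oh(\lg |x|)$, that maps the global configuration at time $t$ to the configuration at time $t+1$. Stacking $T=\Oh(\lg^i|x|)$ such layers gives depth $\Oh(\lg^{i+1}|x|)$ and polynomial size, so $Q\in\NC^{i+1}$.

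First I would normalize the machine. Without loss of generality, word lengths are $\Oh(\lg |x|)$ bits (polynomially bounded values), the number of processors is $p(|x|)$, the memory has $m(|x|)$ cells, and each step is one instruction of a fixed finite instruction set. The configuration consists of every processor's program counter and local registers together with the shared memory, and it is a polynomial-length bit string.

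One step then splits into three parts.
\begin{enumerate}
\item Local computation. Arithmetic on $\Oh(\lg|x|)$-bit words is done with bounded fan-in circuits of depth $\Oh(\lg\lg|x|)$ or $\Oh(\lg|x|)$; addition and comparison are easy. If multiplication is among the allowed operations, I would use known $\NC^1$ circuits for it. The next instruction is selected by a multiplexer over the program, which has constant size.
\item Concurrent reads. For every processor and every memory cell, compare the address with the cell index, AND the result with the cell contents, and OR over all cells. This costs depth $\Oh(\lg m)=\Oh(\lg|x|)$ with bounded fan-in.
\item Concurrent writes. Fix a resolution rule, for instance priority (lowest-indexed processor wins), which simulates the common and arbitrary rules. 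For each cell, compute for each processor $j$ whether it writes there, and determine whether some smaller-indexed processor also writes there by a prefix-OR in depth $\Oh(\lg p)$. Then select the winner's value with an OR-tree, and keep the old value if nobody writes.
\end{enumerate}

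Uniformity needs a check of its own. Each layer is completely regular: it is indexed by processor, cell and bit, and its gates are described by simple arithmetic on those indices. A logspace machine can therefore print the layers one after another, using only counters of size $\Oh(\lg|x|)$ (including the counter for $t\le T$). The initial layer loads $x$ into memory, and the output gate reads the accept bit.

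The main obstacle I expect is the write-conflict resolution together with the depth bookkeeping. Everything must be at most $\Oh(\lg|x|)$ per step with bounded fan-in, because the unbounded fan-in that makes this easy gives only $\mathrm{AC}$-type depth. This is precisely why the simulation loses a logarithmic factor, landing in $\NC^{i+1}$ rather than $\NC^i$.
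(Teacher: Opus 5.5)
The paper gives no proof of this lemma; it only cites Vollmer. The textbook argument behind that citation, the Stockmeyer--Vishkin simulation, is organized like yours: one block of circuitry per \PRAM step. The difference is that it uses \emph{unbounded} fan-in and \emph{constant} depth per step. Addition and comparison are in $\mathrm{AC}^0$, and a priority write is resolved by a single unbounded AND (``no smaller-indexed processor writes to this cell''). This puts $Q$ in $\mathrm{AC}^i$. The extra logarithm appears only at the end, via $\mathrm{AC}^i \subseteq \NC^{i+1}$, by replacing each gate of polynomial fan-in with a binary tree of depth $\Oh(\lg|x|)$.

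You inline that conversion and build each step directly with bounded fan-in in depth $\Oh(\lg|x|)$. Your route is more elementary and avoids $\mathrm{AC}^0$ arithmetic. The textbook route yields a sharper intermediate statement: together with the converse simulation, $\mathrm{AC}^i$ is exactly \CRCW-\PRAM time $\Oh(\lg^i)$ with polynomially many processors. Your read layer, your prefix-OR write resolution and your uniformity argument are all fine.

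The one step that is not really ``without loss of generality'' is the word length.
\begin{itemize}
\item For $i \geq 2$, a processor that doubles a register at every step reaches $\Theta(\lg^i|x|)$ bits, which is not $\Oh(\lg|x|)$.
\item With unit-cost multiplication, repeated squaring reaches $2^{\Theta(\lg^i|x|)}$ bits. This is superpolynomial, so no polynomial-size circuit can even store a configuration.
\end{itemize}
Your plan to use $\NC^1$ multiplication circuits is therefore sound only if the model itself caps words at $\Oh(\lg|x|)$ bits. So state this cap as a hypothesis on the model rather than as a normalization. It is the usual word-\PRAM convention, and the algorithms the paper designs itself only handle polynomially bounded numbers. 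Under this cap the arithmetic question disappears entirely: any fixed operation on two $\Oh(\lg|x|)$-bit words is a polynomial-size lookup table of depth $\Oh(\lg|x|)$.

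If instead registers are unbounded but the instruction set is additive, two repairs are needed.
\begin{itemize}
\item Register contents then have $\Oh(T+\lg|x|)$ bits. This is harmless for addition and comparison, which still take depth $\Oh(\lg\lg|x|)$.
\item Addresses can be as large as $2^{\Oh(\lg^i|x|)}$, so your per-cell memory array is no longer polynomial. Replace it by a log of the at most $p(|x|)\cdot T$ writes, and answer each read by selecting the latest, highest-priority matching entry in depth $\Oh(\lg|x|)$.
\end{itemize}
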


\paragraph{Parameterized complexity.} \label{def:para-classes}
A \deff{parameterized problem} is a pair~$(Q, \kappa)$ of a language~$Q \subseteq \Sigma^*$ and a computable parameterization~$\kappa: \Sigma^* \to \N$. We denote parameterized problems with a leading ``$\p$-'', and the parameter as an index (e.g.,~\problem{$\p_{\tw}$-$k$-core}).
For a complexity class~\cclass{C}, a problem~$(Q, \kappa)$ is in the slice-wise class~\cclass{XC} if each slice~$Q_k = \{ x \in Q \mid \kappa(x) = k\}$ is in~\cclass{C}. It is in the para-class~\para{C} if there is a computable~$f$ and~$A \in \cclass{C}$ such that~$x \in Q \iff (x, 1^{f(\kappa(x))}) \in A$. Observe that~$\para{C} \subseteq \cclass{XC}$.

Note that~\para{P} is the well-known $\cclass{FPT}$ class, consisting of (parameterized) problems decidable in~$f(\kappa(x))\,|x|^{\Oh(1)}$ time, while~\cclass{XP} is the class of problems decidable in~$|x|^{\Oh(f(\kappa(x)))}$ time. Analogously,~\para{L} is the class of problems decidable using~$\Oh(f(\kappa(x))+\lg|x|)$ space, while problems in~\cclass{XL} may use~$\Oh(f(\kappa(x))\,\lg |x|)$ space. In the same manner, for a real~$i \geq 1$, $\paraNC^i$ problems are decidable by a uniform family of \NC-circuits with $\Oh(f(\kappa(x))+\lg^i |x|)$ depth and~$f(\kappa(x))\,|x|^{\Oh(1)}$ size. 
\paragraph{Degeneracy and~$k$-core.}

The~\deff{\kcore} of a graph~$G$ is its largest subgraph in which every vertex has degree at least~$k$. It is unique. The \deff{degeneracy} of~$G$ is the largest~$k$ for which the~\kcore of~$G$ is nonempty.  

In the \Kcore problem, given a graph~$G$, a vertex~$u \in V(G)$, and an integer~$k$, the goal is to decide if~$u$ is in the~\kcore of~$G$.

We add the C- or I- prefix when the problem is restricted to chordal or interval graphs, respectively. We add the~NE- prefix for the variant of deciding if the \kcore is \emph{nonempty}, which is a decision version of degeneracy, and F- for the \emph{function} variant of computing the~\kcore, that is, computing the word in~$\{0,1\}^{v(G)}$ that identifies the vertices in the~\kcore.

\paragraph{Monadic Second-Order logic (\MSO).}
We evaluate Monadic Second-Order (\MSO) formulas over a graph~$G$ with domain $V(G)$ and a binary relation $E(uv)$ representing edge adjacency. An \MSO formula is constructed from individual vertex variables (e.g., $u$, $v$), vertex set variables (e.g., $X$, $Y$), standard Boolean connectives $(\land, \lor, \neg, \Rightarrow, \Leftrightarrow)$, and quantifiers $(\exists$,~$\forall)$ over both individual vertices and sets of vertices. The allowable atomic formulas are equality $(u = v)$, set membership $(u \in X)$, and adjacency $(E(uv))$. We use~$G \vDash \varphi$ to denote that the graph~$G$ satisfies the formula~$\varphi$.

\newcommand{\T}{\ensuremath{\mathcal{T}}\xspace}
\paragraph{Tree and path decompositions.}
A \deff{tree decomposition} of~$G$ is a tree~\T whose nodes~$H \in V(\T)$ are subsets of~$V(G)$ such that \emph{(1)} each edge of~$G$ is contained in at least one node of~\T and \emph{(2)} the nodes containing each vertex form a subtree of~\T. We use \emph{vertex} for elements of~$V(G)$ and \emph{node} for elements of~$V(\T)$. The \deff{width} of~\T is~$\max\{|H|-1 \mid H \in V(\T)\}$, and the \deff{treewidth}~$\tw(G)$ is the minimum width over all tree decompositions of~$G$. When~\T is a path, it is a \deff{path decomposition}, and its minimum width is the \deff{pathwidth}~$\pw(G)$.

\begin{lemma}[\citet{bodlaender_contraction_2006}] \label{lem:degeneracy-leq-tw}
The degeneracy of a graph~$G$ is at most its treewidth. Equivalently, if~$k > \tw(G)$, then the~\kcore of~$G$ is empty.
\end{lemma}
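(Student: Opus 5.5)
The plan is to use the standard elimination-ordering characterization of degeneracy together with a minimum-degree argument on a tree decomposition. The key claim is that every nonempty graph $H$ has a vertex of degree at most $\tw(H)$, and that treewidth does not increase when passing to subgraphs. From these, the $k$-core statement follows at once. Let $C$ be the $k$-core of $G$ and suppose it is nonempty. Then $C$ is a subgraph of $G$, so $\tw(C) \le \tw(G)$; this holds because restricting every bag of a tree decomposition of $G$ to $V(C)$ gives a tree decomposition of $C$ of no larger width. The claim then supplies a vertex of $C$ of degree at most $\tw(C) \le \tw(G)$. Since every vertex of $C$ has degree at least $k$, we get $k \le \tw(G)$. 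Hence the degeneracy is at most $\tw(G)$, which is equivalent to saying that for $k > \tw(G)$ the $k$-core is empty.

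To prove the claim, take a tree decomposition $\T$ of the nonempty graph $H$ with width $\tw(H)$, and assume it has the minimum possible number of nodes. If $\T$ has a single node $B$, every vertex $v$ has $N_H(v) \subseteq B \setminus \{v\}$, so $d_H(v) \le |B| - 1 \le \tw(H)$. Otherwise let $L$ be a leaf of $\T$ with neighbor $P$. If $L \subseteq P$, deleting $L$ leaves a valid decomposition with fewer nodes, which contradicts minimality. So some vertex $v$ lies in $L \setminus P$. By the subtree property, $L$ is the only node containing $v$. Every edge $vw$ must then be covered by $L$, so $N_H(v) \subseteq L \setminus \{v\}$ and $d_H(v) \le |L| - 1 \le \tw(H)$.

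The only delicate step is this leaf argument. It needs a vertex that appears only in a leaf bag, and minimality of $\T$ (or equivalently, first contracting redundant bags) guarantees that one exists. Everything else is bookkeeping. Alternatively, since the paper attributes the statement to \citet{bodlaender_contraction_2006}, one could simply cite it, but the argument above is short and self-contained.
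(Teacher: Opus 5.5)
Your proof is correct: monotonicity of treewidth under subgraphs reduces the claim to showing that a nonempty $k$-core $C$ has a vertex of degree at most $\tw(C)$, and your leaf-bag argument on a node-minimal tree decomposition supplies exactly that. The paper gives no proof of its own and only cites Bodlaender, Wolle and Koster, for whom this minimum-degree-of-a-subgraph lower bound on treewidth is the standard argument, so your proposal simply makes the citation self-contained.
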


The notation in the previous definitions may be combined, for example in~\problem{$\p_{\tw}$-F-$k$-core}, which is the function variant of \Kcore parameterized by treewidth. While~$k$ seems unbounded in that notation, by~\Cref{lem:degeneracy-leq-tw} any case where~$k > \tw(G)$ is trivial. Thus, we may assume $k \leq \tw(G)$, and a dual parameterization by treewidth and~$k$ (that is, \problem{$\p_{\tw,k}$-F-$k$-core}) is always redundant.

\section{Parameterization by treewidth, and chordal graphs} \label{sec:courcelles}

A famous meta-theorem of \citet{courcelle_graph_1990} states that if a problem is expressible in \MSO, it can be solved in linear time on graphs of bounded treewidth. This result has since been expanded in several ways for different models and bounds, such as the following lemma for parallel complexity.

\begin{lemma}[{\citet[Theorem 2]{bannach_parallel_2017}}] \label{lemma:mso-para-nc}
    If~\pA is a problem expressible in \MSO, then~$\p_{\tw,\varphi}\text{-\pA}$, the problem~\pA parameterized by treewidth and its formula size, can be efficiently solved in parallel. More precisely, \(\p_{\tw,\varphi}\text{-\pA} \in \paraNCeps.\)
\end{lemma}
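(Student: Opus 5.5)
The plan is the standard proof of Courcelle's theorem, with each sequential stage replaced by a parallel one, followed by a depth analysis of the resulting circuit. Throughout, let $t = \tw(G)$ and $n = v(G)$. There are three stages: (i) compute a tree decomposition of width $\Oh(t)$ whose tree has depth $\Oh(\lg n)$; (ii) turn $\varphi$ into a finite tree automaton that runs over encodings of such decompositions; (iii) evaluate that automaton on the decomposition in logarithmic depth. The functions of $t$ and $|\varphi|$ that appear are allowed to be arbitrary computable $f$, since the target class is $\paraNCeps$.

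For stage (ii), I would use the classical Thatcher--Wright/Courcelle translation. A rooted binary tree decomposition of width $w$ is encoded as a labelled tree over a finite alphabet $\Sigma_w$. Each label records the bag contents relative to a fixed palette of $w+1$ colours, together with the adjacencies inside the bag. The formula $\varphi$ is rewritten as an MSO formula over such labelled trees, and this formula is compiled into a deterministic bottom-up tree automaton $\mathcal{A}_{\varphi,w}$. This step uses nothing about $G$. The state set has size $f(|\varphi|,w)$, which may be non-elementary, but depends only on the parameters. The circuit can therefore hardwire the transition table, and this costs only $f(\kappa)$ size and constant depth per lookup.

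Stage (iii) evaluates $\mathcal{A}_{\varphi,w}$ on the encoded tree. If the tree has depth $\Oh(\lg n)$, a level-by-level evaluation suffices. Each level is a table lookup on a pair of child states plus a label, which is a circuit of depth $\Oh(\lg f(\kappa))$, so the total depth is $\Oh(f(\kappa)\lg n)$. This is almost, but not quite, of the para form. To obtain depth $f(\kappa)+\Oh(\lg n)$ I would instead run Miller--Reif style tree contraction, composing state-to-state functions of size $f(\kappa)$ along contracted paths. Alternatively, one can note that evaluating a fixed finite automaton on a balanced tree is in $\NC^1$ once the alphabet is treated as part of the parameter. Either way, stage (iii) lies well inside the target bound.

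I expect stage (i) to be the main obstacle, and it is where the exponent $2+\epsilon$ comes from. My first attempt would be a parallel version of Bodlaender's algorithm (in the style of Bodlaender--Hagerup), which finds a width-$\Oh(t)$ decomposition using repeated balanced separators. Each round finds, for the current piece, a separator of size $\Oh(t)$ that splits the piece into parts of at most a constant fraction of its size. Such separators can be found by trying all $n^{\Oh(t)}$ candidate sets, but that is an $\cclass{XP}$ bound. The para version instead uses flow and connectivity computations, namely bounded-size vertex cuts checked by reachability, which lies in $\NL \subseteq \NC^2$. There are $\Oh(\lg n)$ rounds, so naively the depth is $\Oh(\lg^3 n)$, giving $\NC^3$ (compare \Cref{lem:pram-to-nc}). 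To reach $2+\epsilon$, I would split the recursion into $1/\epsilon$ phases and resolve each phase's $\lg^{\epsilon} n$ levels simultaneously, so that the rounds amortise against the reachability computations. I would also balance the resulting decomposition, with width at most tripled, via the standard logarithmic-depth rebalancing. I would take the details of this amortisation from the literature rather than rederive them.
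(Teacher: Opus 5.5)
Your skeleton (decomposition, then automaton, then evaluation) is the right framework, and stages (ii) and (iii) are routine. Stage (i), however, carries the whole bound, and the route you sketch for it does not reach \paraNCeps.

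\textbf{Stage (i): the depth does not come out.} A top-down separator recursion has $\Theta(\lg n)$ rounds, and they are inherently sequential: the pieces of round $j+1$ exist only after the separators of round $j$ are fixed. Each round needs $\Oh(t)$ sequential augmenting-path computations, and each is a directed reachability of bounded fan-in depth $\lg^2 n$ by known methods. That gives depth $f(t)\lg^3 n$, i.e.\ \para{\ensuremath{\NC^{3+\epsilon}}} at best.

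\textbf{The repair does not work.} Levels cannot be ``resolved simultaneously'' without enumerating superpolynomially many candidate pieces. The numbers also do not fit: $1/\epsilon$ phases of $\lg^{\epsilon} n$ levels do not cover $\Theta(\lg n)$ levels, and $\lg^{1-\epsilon} n$ phases each costing one reachability give $\lg^{3-\epsilon} n$.

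\textbf{The $\Oh(\lg n)$ round count is also unsupported.} Guessing a partition of a bounded boundary set $W$ and cutting between its parts gives separators balanced with respect to $W$. That controls the width, not the number of vertices of the piece, and you do not say how to choose terminals so the cut is vertex-balanced.

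\textbf{Bodlaender--Hagerup is not separator-based.} It parallelizes Bodlaender's reduction scheme: shrink the graph by a $1-1/\mathrm{poly}(t)$ factor via a large matching or many simplicial-type vertices, recurse, lift the decomposition, and repair the width with the Bodlaender--Kloks dynamic program. Deferring ``the amortisation'' to the literature therefore amounts to citing the lemma itself. That is all the paper does: it gives no proof and imports the statement from Bannach and Tantau.

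\textbf{The $\epsilon$ has a simpler source.} Splitting on whether $f(\kappa)\le\lg^{\epsilon}n$ gives $f(\kappa)\lg^2 n\le f(\kappa)^{1+2/\epsilon}+\lg^{2+\epsilon}n$. So any circuit of depth $f(\kappa)\cdot\Oh(\lg^2 n)$ is already in \paraNCeps. What stage (i) must supply is a procedure of that depth, for example $f(t)\cdot\Oh(\lg n)$ rounds of $\Oh(\lg n)$ depth each. No amortization across recursion levels is needed.

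\textbf{Stage (iii) is already fine.} The same inequality in the form $g(\kappa)\lg n\le g(\kappa)^2+\lg^2 n$ puts your level-by-level evaluation inside $\paraNC^2$. This holds even after counting $\Oh(\lg n)$ depth per level for routing child states along a data-dependent tree. Your remark that it is ``not quite of the para form'' is therefore wrong, and tree contraction is unnecessary there.

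\textbf{Two smaller omissions.} The palette encoding needs a vertex colouring that is consistent across bags. It must be computed (top-down on the balanced decomposition) or avoided by labelling each node with the correspondence between its bag and its parent's. Free vertex variables, like $u$ in the formula the paper later uses, must be supplied as unary labels.
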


Many problems, even several~$\NP$-complete ones, may be expressed in \MSO on graphs of bounded treewidth. For example, an \NP-complete variant of \Kcore, called the \emph{anchored \kcore problem}, is expressed in \MSO in~\cite[\S 9.2.2]{bulian_parameterized_2017}. We now express~\kcore in \MSO, and note that the size of the formula depends only on~$k$.
\begin{gather*}
    \textup{\textsc{D}}(a_1, \ldots, a_x): \bigand_{i \in [x-1]} \bigand_{j \in [i+1,x]} \left(a_i \neq a_j\right) \\[-1mm]
    \textup{\textsc{Is-$k$-set}}(K): \forall v \big(v \in K \Rightarrow \exists\,a_1 \cdots a_k\, \textup{\textsc{D}}(a_1, \ldots, a_k) \band \bigand_{i \in [k]}(a_i \in K \band E(va_i))\big) \\[-1mm]
    \textup{\textsc{In-$k$-core}}(u): \exists K ( u \in K \band \textup{\textsc{Is-$k$-set}}(K)) \\[3mm]
    \textup{\textsc{NE-$k$-core}}: \exists u\ \textup{\textsc{In-$k$-core}}(u)
\end{gather*}

It is straightforward to verify that~$G \vDash \textup{\textsc{In-\(k\)-core}}(u)$ if and only if~$u$ is in the~\kcore of~$G$, since~$\textup{\textsc{Is-$k$-set}}(K)$ ensures every vertex in~$K$ has degree at least~$k$ in~$K$.

\begin{theorem} \label{res:tw-kcore}
    \problem{$\p_{\tw}$-$k$-core} is in~\paraNCeps.
\end{theorem}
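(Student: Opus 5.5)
The plan is to combine the \MSO formula \textsc{In-$k$-core}$(u)$ given above with \Cref{lemma:mso-para-nc} and \Cref{lem:degeneracy-leq-tw}. The one concern is that \Cref{lemma:mso-para-nc} is parameterized by both treewidth and formula size, while the theorem uses only treewidth. So we must show that the formula size is bounded by a computable function of $\tw(G)$ whenever the instance is nontrivial.

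\textbf{Formula size.} First, we bound the size of \textsc{In-$k$-core}$(u)$. The subformula $\textsc{D}(a_1,\ldots,a_k)$ has $\Oh(k^2)$ atoms, and \textsc{Is-$k$-set} adds $\Oh(k)$ more atoms and quantifiers. Hence $|\varphi_k| = \Oh(k^2)$, and $\varphi_k$ can be produced from $k$ by a simple (logspace) procedure. The free variable $u$ is handled in one of two standard ways: treat it as a constant or unary predicate marking the queried vertex, or replace the query by $\exists u\,(P(u) \band \textsc{In-$k$-core}(u))$ with $P$ a unary label on $G$. Either way, it fits the framework of \citet{bannach_parallel_2017}, which allows labeled (colored) structures.

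\textbf{Case split on $k$.} Given $(G,u,k)$, we split into two cases.

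\emph{Case $k > \tw(G)$.} By \Cref{lem:degeneracy-leq-tw}, the $k$-core is empty, so the answer is ``no''.

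\emph{Case $k \le \tw(G)$.} The formula size is $\Oh(\tw(G)^2)$, a computable function of the parameter. \Cref{lemma:mso-para-nc} then gives a circuit of depth $\Oh(f(\tw(G)) + \lg^{2+\epsilon}|x|)$ and size $f(\tw(G))|x|^{\Oh(1)}$.

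\textbf{Main obstacle.} The hard step is carrying out this case split inside a $\paraNCeps$ circuit, since computing $\tw(G)$ exactly is not immediate. We would avoid computing it. The circuit receives $1^{f(\tw(G))}$ as its advice/padding, in the sense of the para-class definition, so $t=\tw(G)$ is available. The circuit compares $k$ with $t$. If $k>t$ it outputs $0$; otherwise it runs the Bannach--Tantau circuit for $\varphi_k$.

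If instead the black box needs a tree decomposition, we note that Bannach and Tantau compute one of width bounded in $\tw(G)$ within the same bounds. We can then use that width $w \ge \tw(G)$ as the threshold: if $k>w$, answer ``no''. This remains sound because $k>w\ge\tw(G)$, and otherwise the formula size is $\Oh(w^2)$.

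Finally, since $f(\tw) + \Oh(\tw^2)$ is still a computable function of $\tw$, the combined circuit family satisfies the $\paraNCeps$ definition.
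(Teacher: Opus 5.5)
Your proposal is correct and takes the same approach as the paper: use \Cref{lem:degeneracy-leq-tw} to restrict to $k \le \tw(G)$, so that \textup{\textsc{In-$k$-core}}$(u)$, of size $\Oh(k^2)$, is bounded by a computable function of the treewidth, and then apply \Cref{lemma:mso-para-nc}. Your extra care about the free variable and about implementing the case split goes beyond what the paper spells out, but the argument is the same. (Reading $\tw(G)$ off the padding implicitly requires choosing $f$ injective and easily invertible; alternatively you threshold against the width of a computed decomposition.)
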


\begin{proof}
By~\Cref{lem:degeneracy-leq-tw}, we may assume~$k \leq \tw(G)$, and thus the formula \(\textup{\textsc{In-\(k\)-core}}(u)\) has bounded size for graphs of bounded treewidth. Finally, by~\Cref{lemma:mso-para-nc}, \(G \vDash \textup{\textsc{In-\(k\)-core}}(u)\) can be decided in~\paraNCeps, and thus we can determine if~$u$ is in the~\kcore of~$G$ in~\paraNCeps.\qed
\end{proof}

Several graph classes have bounded treewidth, including all~$\ell$-outerplanar graphs for bounded~$\ell$, series-parallel graphs, and Halin graphs.
Chordal graphs do not have bounded treewidth; however, we will show how to use~\Cref{res:tw-kcore} to solve \Kcore for a chordal graph when~$k$ is bounded.

\begin{definition}
    A \deff{deletion order for~\kcore} in a graph~$G$ is a sequence~$(u_1, \ldots, u_\ell)$ such that, for all~$i \in [\ell]$, the degree of~$u_i$ in~$G$ after removing vertices~$u_1, \ldots, u_{i-1}$ is less than~$k$.
\end{definition}

Note that a vertex~$u$ is contained in \emph{some} deletion order for~\kcore if and only if~$u$ is \emph{not} in the~\kcore of~$G$. The folklore linear algorithm for~\kcore consists of computing any maximal deletion order for~\kcore.

\begin{lemma} \label{res:chordal-g-to-gprime}
    Let~$G$ be a chordal graph,~$u \in V(G)$, and~$k \in \N$. Then there exists a subgraph~$G' \subseteq G$ with~$u \in V(G')$ and~$\tw(G') \leq k$ such that~$u$ is in the~\kcore of~$G$ if and only if~$u$ is in the~\kcore of~$G'$. Furthermore,~$G'$ can be computed in~$\cclass{FNC}^3$, and in~\cclass{FL} if~$G$ is an interval graph.
\end{lemma}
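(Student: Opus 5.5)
The plan rests on one observation: in a chordal graph, every clique of size at least~$k+1$ lies entirely inside the \kcore, since each of its vertices already has~$k$ neighbours inside it. Let~$S$ be the set of vertices that lie in some maximal clique of size at least~$k+1$. Then~$S$ is contained in the \kcore of~$G$. Every vertex outside~$S$ sees only cliques of size at most~$k$.

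The argument splits into two cases.

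\emph{Case $u \in S$.} Take~$G'$ to be a $(k+1)$-clique containing~$u$. This clique is its own \kcore and has treewidth~$k$, so the equivalence holds trivially.

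\emph{Case $u \notin S$.} Since~$S$ survives every deletion order, a vertex of~$V(G)\setminus S$ can only be deleted through the peeling of~$V(G)\setminus S$ itself. Here every vertex of~$S$ acts as a permanently present neighbour. Call this peeling process~$(\ast)$. The \kcore of~$G$ is~$S$ together with whatever survives~$(\ast)$.

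I would therefore build~$G'$ as a subgraph that preserves exactly the adjacencies between~$V(G)\setminus S$ and~$S$, while keeping the neighbours in~$S$ alive, and that has small treewidth. To find such a subgraph I would work with a clique tree~$\T$ of~$G$, which is a tree decomposition whose bags are the maximal cliques. Bags of size at most~$k$ are left untouched. Each large bag~$C$ (of size at least~$k+1$) is replaced by a subset of size exactly~$k+1$. Two things must be checked for this replacement.
\begin{itemize}
\item \emph{Separators.} The subset has to contain the vertices of~$C$ that appear in adjacent bags which are not large. These are the separators towards the part of the graph outside~$S$. The subset also has to be chosen consistently, so that the subtree property of the decomposition is preserved.
\item \emph{Survival of attached vertices.} Every vertex~$w\in S$ that keeps an edge to~$V(G)\setminus S$ must lie in some retained $(k+1)$-clique. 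This guarantees that~$w$ is never deleted from~$G'$.
\end{itemize}

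The main obstacle is the first item: bounding the separators. A separator between a large bag~$C$ and an adjacent small bag~$C'$ is a clique contained in~$C'$, so it has size at most~$k-1$. However, a single large bag may have many small neighbours with different separators. I would handle this by subdividing~$C$ into several $(k+1)$-subsets~$C\cap(C'\cup R)$, one per neighbouring bag~$C'$, where~$R$ is a fixed filler set of vertices of~$C$. These subsets form a star in the new decomposition, which gives width~$k$. The edges of~$G[C]$ not covered by any subset are dropped. This is harmless, because every vertex of~$C$ that is kept still lies in a $(k+1)$-clique of~$G'$.

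With this construction, induction along a deletion order shows that the process~$(\ast)$ in~$G'$ coincides with the one in~$G$. Hence~$u$ is in the \kcore of~$G$ if and only if it is in the \kcore of~$G'$.

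For the complexity bounds, the required tools are a clique tree (or perfect elimination ordering) of a chordal graph, the set~$S$, and the per-bag subsets. I would compute all of these in~$\cclass{FNC}^3$ using known parallel chordal-graph recognition and clique-tree algorithms, together with \Cref{lem:pram-to-nc}. For interval graphs, the clique path is obtained by sorting interval endpoints, and scanning it needs only logarithmic space, which gives~$\cclass{FL}$.
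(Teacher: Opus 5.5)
\textbf{There is a genuine gap in the treewidth bound for the case $u\notin S$.}

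Your $G'$ keeps every small bag of the clique tree untouched, i.e.\ all light vertices together with all their edges to $S$. You then claim that each large bag $C$ can be replaced by a star of $(k+1)$-subsets $C\cap(C'\cup R)$. That star is not a tree decomposition. A vertex $x\in C$ lying in the separators $C\cap C'_1$ and $C\cap C'_2$ of two different small neighbours appears in two leaves, so the subtree property forces it into the centre. In the worst case the centre must contain all of $C$.

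No choice of filler sets can repair this, because keeping all light vertices and their adjacencies can by itself force unbounded treewidth. Take $k=3$, a clique on $x_1,\dots,x_N$ with $N\ge 5$, and for every pair $i<j$ a vertex $y_{ij}$ adjacent exactly to $x_i$ and $x_j$. This graph is chordal, since each $y_{ij}$ is simplicial. Each $y_{ij}$ is light, because its only maximal clique has size $3=k$, and each $x_i$ is heavy. Any subgraph containing all edges $y_{ij}x_i$ and $y_{ij}x_j$ contains a subdivision of $K_N$. Its treewidth is therefore at least $N-1>k$, whichever light vertex $u$ you started from.

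The missing idea is locality. A bag containing a light vertex is light, so all bags containing a light vertex lie in a single component of light bags of the clique tree $\mathcal{T}$. Light vertices in different components are therefore non-adjacent. Since heavy vertices are never deleted, whether $u$ survives depends only on the light component containing the bags of $u$.

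The paper therefore keeps only the subtree $\mathcal{T}'$ formed by that component together with the heavy bags adjacent to it; these heavy bags are leaves of $\mathcal{T}'$. Each heavy leaf $H$ has a unique neighbour $L$ in $\mathcal{T}'$. Hence every vertex of $H$ that occurs elsewhere in $\mathcal{T}'$ lies in $H\cap L$, which has at most $k$ vertices. So $H$ can be trimmed to exactly $k+1$ vertices by discarding vertices of $H\setminus L$ only. This yields a decomposition of width $k$ of the induced subgraph $G'$.

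In this $G'$, every light vertex $v$ has $N_{G'}(v)=N_G(v)$, and every heavy vertex lies in a $(k+1)$-clique. Your deletion-order argument then goes through. For interval graphs, note also that you first need an interval representation before sorting endpoints; the paper obtains it in \cclass{FL} via a known recognition result.
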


\begin{proof}
\newcommand{\Tt}{\ensuremath{\mathcal{T}'}\xspace}
    Since~$G$ is chordal, there exists a tree decomposition~\T of~$G$ such that each node of~\T is one of the~$\Oh(v(G))$ maximal cliques of~$G$~\cite[Theorem 4.1]{naor_fast_1989}.
    We say a node of~\T is \deff{heavy} if it contains at least~$k+1$ vertices of~$G$, and \deff{light} otherwise. A vertex of~$G$ is \deff{heavy} if it belongs to \emph{some} heavy node of~\T, and \deff{light} otherwise.
    If~$u$ is a heavy vertex, then it trivially belongs to the~\kcore of~$G$, and we can take~$G'$ as~$k+1$ vertices from a heavy node containing~$u$.
    Otherwise, denote by~\Tt the maximal subtree of~\T containing all nodes that contain~$u$ and not containing an \emph{internal} heavy node. Each heavy leaf~$H$ of~\Tt is adjacent to a light node~$L$. The vertices of~$H$ are either only in~$H$ or they must be in~$L$ as well (since the nodes containing them are connected in~\Tt). Note~$|H \cap L| \leq |L| \leq k$, so we can (arbitrarily) remove vertices from~$H \setminus L$ until~$H$ has exactly~$k+1$ vertices, and note that all light nodes are unaffected. Let~$G'$ be the subgraph induced by all remaining vertices in all nodes of~\Tt. Note that~$G'$ is chordal, because chordality is closed under induced subgraphs, and~\Tt is a tree decomposition of~$G'$. Thus, its largest clique has size at most~$k+1$, and~$\tw(G') = \omega(G') - 1 \leq k$ since~$G'$ is chordal. 

    We now argue that~$u$ is in the~\kcore of~$G$ if and only if~$u$ is in the~\kcore of~$G'$.
    Note that any light vertex~$v$ of~$G'$ satisfies~$N_{G'}(v) = N_G(v)$, that is, it has the same neighborhood in $G$ and~$G'$, and any vertex of~$G'$ has the same weight (heavy/light) in~$G$.
    If~$u$ is not in the~\kcore of~$G'$, then there exists a deletion order (for~\kcore)~$D$ in~$G'$ containing~$u$, and note that~$D$ is also a valid deletion order in~$G$; thus,~$u$ is not in the~\kcore of~$G$. Conversely, if~$u$ is not in the~\kcore of~$G$, then there exists a deletion order~$D$ in~$G$ containing~$u$ and, if we remove all vertices not in~$G'$ from~$D$, it becomes a valid deletion order for~$G'$ containing~$u$ since~$G' \subseteq G$; thus,~$u$ is not in the~\kcore of~$G'$.

    The known algorithms for computing~\T run in~$\Oh(\lg^2 v(G))$ time on a \PRAM~\cite{klein_efficient_1996,naor_fast_1989}; thus, by~\Cref{lem:pram-to-nc}, this can be done in~$\FNC^3$. However, if~$G$ is an interval graph, we can find its interval representation, and from that a path decomposition, in~\cclass{FL}~\cite{kobler_interval_2011}.
    
    Finally, finding the correct subtree~\Tt of \T uses reachability, which is in~\L~\cite{reingold_undirected_2008}, and removing vertices from the heavy leaves of~\Tt and computing~$G'$ can be easily done in~\cclass{FL}, which finishes the proof.\qed
    %
\undef\Tt
\undef\T
\end{proof}

\begin{theorem} \label{res:chordal-nc}
    \problem{$\p_k$-C-$k$-core} is in~$\paraNC^3$ and \problem{$\p_k$-I-$k$-core} in~\paraNCeps.
\end{theorem}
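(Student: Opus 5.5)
The plan is to compose \Cref{res:chordal-g-to-gprime} with \Cref{res:tw-kcore}. Given an instance $(G,u,k)$ with $G$ chordal (respectively interval), first apply \Cref{res:chordal-g-to-gprime} to compute $G' \subseteq G$ with $u \in V(G')$, $\tw(G') \leq k$, and $u$ in the \kcore of $G$ if and only if $u$ is in the \kcore of $G'$. Then decide whether $u$ is in the \kcore of $G'$ using the algorithm of \Cref{res:tw-kcore}. Since $\tw(G') \leq k$, the treewidth parameter of the new instance is bounded by the original parameter $k$. Hence a running time of the form $f(\tw(G')) + \lg^{2+\epsilon}$ becomes $f(k) + \lg^{2+\epsilon}$, after replacing $f$ by its monotone majorant $\max_{t \le k} f(t)$ if needed.

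The complexity bookkeeping is as follows. For chordal graphs, the reduction is in $\FNC^3$ and produces $G'$ of size polynomial in $v(G)$. We then feed $G'$ into a $\paraNCeps$ circuit, whose depth is $\Oh(f(k)+\lg^{2+\epsilon}|x|)$ and whose size is $f(k)|x|^{\Oh(1)}$. Composing the two circuits adds their depths, giving $\Oh(\lg^3|x| + f(k) + \lg^{2+\epsilon}|x|)$. Taking $\epsilon \le 1$ (or noting that $\paraNCeps \subseteq \paraNC^3$ for $\epsilon\le 1$), this is $\Oh(f(k)+\lg^3|x|)$, and the size stays polynomial times $f(k)$. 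Uniformity is preserved because both families are logspace-uniform and we compose them in logspace. This places \problem{$\p_k$-C-$k$-core} in $\paraNC^3$. For interval graphs, the reduction is in $\cclass{FL} \subseteq \FNC^2$, so the composed depth is $\Oh(\lg^2|x| + f(k) + \lg^{2+\epsilon}|x|)$, which is within $\paraNCeps$.

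The main point needing care is that \Cref{res:tw-kcore}, through \Cref{lemma:mso-para-nc}, is stated for the decision problem parameterized by treewidth. To use it as a subroutine on a computed graph, I must argue that the para-class circuits can be applied to the output of the reduction. One concern is that the circuit for input size $|G'|$ must be selected, but $|G'| \le |G|$ and padding handles this. The other is that the parameter supplied is an upper bound on treewidth ($k$) rather than $\tw(G')$ itself. I would handle the latter by noting that the algorithm of \Cref{lemma:mso-para-nc} needs only a bound on treewidth, since it computes a decomposition of width bounded by a function of that bound. Alternatively, I would observe that para-classes are closed under reductions computable in the base class that map the parameter to a bounded function of the original parameter. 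I would also note that the formula \textsc{In-$k$-core} has size depending only on $k$, so the formula-size parameter is bounded by $k$ as well.
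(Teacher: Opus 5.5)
Your proposal is correct and matches the paper's proof: compute $G'$ via \Cref{res:chordal-g-to-gprime} (in $\FNC^3$ for chordal graphs, and in $\cclass{FL} \subseteq \FNC^2$ for interval graphs), then run the circuit from \Cref{res:tw-kcore} on $G'$ using $\tw(G') \leq k$, and add the depths. Your extra remarks on padding, uniformity, and passing $k$ as an upper bound on treewidth fill in details the paper leaves implicit.
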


\begin{proof}
    Let~$G$ be a chordal graph. We can compute~$G'$ according to~\Cref{res:chordal-g-to-gprime} with a circuit of depth~$\Oh(\lg^3 v(G))$. By~\Cref{res:tw-kcore}, we can decide whether~$u$ is in the~\kcore of~$G'$ with a circuit of depth~$\Oh(f(k) + \lg^{2+\epsilon} v(G))$ for some computable~$f$. The final circuit has depth~$\Oh(f(k) + \lg^3 v(G))$, meaning the problem is in~$\paraNC^3$. The bound for interval graphs follows analogously by using~$\cclass{FL} \subseteq \cclass{FNC}^2$.\qed
\end{proof}

We immediately get the following.

\begin{corollary} \label{res:tw-chordal-fnc}
    \problem{$\p_{\tw}$-F-$k$-core} and \problem{$\p_k$-FI-$k$-core} are in~\para{\ensuremath{\FNC^{2+\epsilon}}}, and \problem{$\p_k$-FC-$k$-core} is in~\para{\ensuremath{\FNC^3}}.
\end{corollary}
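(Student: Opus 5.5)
The function variant should follow from the decision results by running one copy of the decision circuit per vertex, all in parallel. Given $G$ and $k$, for each $v \in V(G)$ I would build the decision circuit for the instance $(G, v, k)$ and take its output bit as the $v$-th coordinate of the word in $\{0,1\}^{v(G)}$. The circuits for different $v$ share no gates and read the same input, so the combined circuit has the same depth as a single one. Its size grows only by a factor of $v(G) \le |x|$, so a bound of the form $f(\kappa(x))\,|x|^{\Oh(1)}$ is preserved. Uniformity is preserved too: a logspace machine can generate the $v(G)$ copies, fixing the designated vertex $u = v$ in copy $v$ by hardwiring it, which it can do by iterating over $v$ with a logarithmic counter.

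For \problem{$\p_{\tw}$-F-$k$-core}, I would apply this to \Cref{res:tw-kcore}. The parameter $\tw(G)$ does not depend on the query vertex, so every copy has depth $\Oh(f(\tw(G)) + \lg^{2+\epsilon}|x|)$, and the function problem lies in $\para{\FNC^{2+\epsilon}}$. For the chordal and interval cases, I would apply the same construction to \Cref{res:chordal-nc}. Since that theorem's circuit computes $G'$ from $u$ via \Cref{res:chordal-g-to-gprime}, each copy computes its own $G'_v$. The depth is then $\Oh(f(k)+\lg^3|x|)$ for chordal graphs and $\Oh(f(k)+\lg^{2+\epsilon}|x|)$ for interval graphs, using $\cclass{FL}\subseteq\FNC^2$ as in that proof. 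This gives $\para{\FNC^3}$ and $\para{\FNC^{2+\epsilon}}$ respectively.

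The one step that needs care is checking that the parameterized size bound survives. Since $k \le \tw(G)$ may be assumed by \Cref{lem:degeneracy-leq-tw}, and $k$ itself is the parameter in the chordal and interval cases, the factor $f$ is the same across copies. Multiplying by $v(G)$ therefore keeps the size of the form $f(\kappa)\,|x|^{\Oh(1)}$. I expect no real obstacle beyond this bookkeeping and the uniformity check above.
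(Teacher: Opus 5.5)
Your proposal is correct and matches the paper's proof, which also runs one copy of the decision circuit from \Cref{res:tw-kcore,res:chordal-nc} per vertex, with depth unchanged and size still polynomial. Your additional checks are sound bookkeeping that the paper leaves implicit: uniformity, the parameter being independent of the query vertex, and each chordal copy computing its own $G'_v$.
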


\begin{proof}
    Recall these are the (function) problems of computing the~\kcore parameterized by treewidth in any graph, or by~$k$ in interval or chordal graphs, respectively.
    Use a copy of the circuit given by \Cref{res:tw-kcore,res:chordal-nc} for each vertex. The depth remains the same, and the size is still polynomial.\qed
\end{proof}

When we consider slice-wise tractability (X-classes instead of para-classes), another variant of Courcelle's result by~\citet{elberfeld_logspace_2010} shows that any problem~\pA expressible in \MSO on graphs of bounded treewidth can be solved in \L, which, in parameterized notation, means~$\p_{\tw,\varphi}\text{-\pA} \in \XL$. Note that \XL is not known to be comparable with \paraNCeps, so neither result is inherently stronger than the other. By this variant, we may easily obtain the following result.

\begin{corollary} \label{res:x-kcore}
    \problem{$\p_{\tw}$-F-$k$-core} and \problem{$\p_k$-FI-$k$-core} are in~\cclass{XFL}, and their decision variants \problem{$\p_{\tw}$-$k$-core} and \problem{$\p_k$-I-$k$-core} are in~\XL.
\end{corollary}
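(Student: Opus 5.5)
The plan mirrors the proofs of \Cref{res:tw-kcore,res:chordal-nc,res:tw-chordal-fnc}, replacing \Cref{lemma:mso-para-nc} with the logspace variant of Courcelle's theorem by \citet{elberfeld_logspace_2010}. That variant states that for each fixed MSO formula~$\varphi$ and fixed treewidth bound~$t$, deciding $G \vDash \varphi$ (with free vertex variables assigned) is in~\L.

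\emph{Treewidth parameterization.} Fix a slice, that is, fix $t = \tw(G)$. For an instance $(G,u,k)$, first compare~$k$ with~$t$. If $k > t$, then \Cref{lem:degeneracy-leq-tw} says the \kcore is empty, and we answer ``no''. Otherwise $k \le t$, so only finitely many formulas $\textup{\textsc{In-$k$-core}}(u)$ arise, one for each $k \in [0,t]$. Each has size bounded by a function of~$t$. The logspace machine hardwires all of them and evaluates the one matching the input~$k$, using the result of \citet{elberfeld_logspace_2010}. This puts \problem{$\p_{\tw}$-$k$-core} in \XL.

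A subtle point is that, as a slice of a parameterized problem, the slice is defined by $\kappa(x) = \tw(G)$. The machine therefore never needs to compute the treewidth; it only needs the promise that $\tw(G) = t$. If the cited theorem requires a tree decomposition, it computes one itself in logspace for bounded treewidth, which Elberfeld et al. show. For the function variant, we iterate over all vertices $v \in V(G)$ with a logspace counter and output the decision bit for each. A loop of logspace computations reusing the same work tape stays in \cclass{FL}, so \problem{$\p_{\tw}$-F-$k$-core} is in \cclass{XFL}.

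\emph{Interval graphs parameterized by~$k$.} Given $(G,u,k)$ with $G$ an interval graph and $k$ fixed, apply \Cref{res:chordal-g-to-gprime}. It produces in \cclass{FL} a graph~$G'$ with $\tw(G') \le k$ such that $u$ lies in the \kcore of~$G$ if and only if it lies in the \kcore of~$G'$. Then we run the \XL procedure above on~$G'$ with treewidth bound~$k$.

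The one step needing care is composing the two logspace computations, since $G'$ cannot be written down in logarithmic space. The standard composition lemma for logspace functions handles this: we recompute bits of~$G'$ on demand, which keeps the composition in logspace. Each slice (fixed~$k$) is thus in \L, giving \problem{$\p_k$-I-$k$-core} in \XL. The function version again loops over all $v \in V(G)$, building a separate~$G'_v$ for each vertex with \Cref{res:chordal-g-to-gprime} applied to~$v$. This gives \cclass{XFL}.

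I expect the only potential obstacle to be confirming that the Elberfeld–Jakoby–Tantau result handles formulas with a free vertex variable (the input~$u$) and does not require a decomposition as input. If it only handles sentences, I would encode~$u$ by attaching a distinguishing gadget, such as a pendant clique of size $k+3$ adjacent to~$u$ alone. That changes the treewidth by only a constant and makes~$u$ definable in MSO. An alternative is to use the version of their theorem for structures with unary predicates, marking~$u$ with a unary relation.
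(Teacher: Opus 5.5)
Your proposal is correct and takes the paper's route. The paper's one-line argument substitutes the logspace Courcelle theorem of \citet{elberfeld_logspace_2010} for \Cref{lemma:mso-para-nc}, uses $k \le \tw(G)$ to keep the formula bounded, composes with the \cclass{FL} construction of \Cref{res:chordal-g-to-gprime} for interval graphs, and loops over vertices for the function variants, exactly as you do.

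One caveat concerns only your fallback: a clique attached to $u$ raises $u$'s degree inside the core, so the formula would have to be relativized to the original vertices. Marking $u$ with a unary predicate works directly, because that theorem applies to arbitrary logical structures of bounded treewidth.
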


In summary, the results in this section show that, even though chordal graphs may have unbounded treewidth, large treewidth implies the existence of large cliques on which we can efficiently solve the problem, while still using Courcelle's results for the remaining ``tree-like'' graph.

\section{Parameterization by pathwidth, and interval graphs} \label{sec:pw-interval}

When given a path decomposition of~$G$, we can show that more efficient parallel algorithms exist for computing the~\kcore of~$G$. They depend on the width of the decomposition by a single exponential, which allows \NC algorithms even for logarithmic width. To do this, we cannot use \MSO formulas and Courcelle variants, as a tower of exponentials is inherent to that approach~\cite{lampis_first_2023}. Instead, we tailor-make a dynamic programming algorithm.

\renewcommand{\P}{\ensuremath{\mathcal{P}}\xspace}

Let~$G$ be a graph and \P be a path decomposition of~$G$. We can see the path decomposition as an assignment of an interval~$[\ell_u, r_u]$ to each vertex~$u$ such that if~$uv \in E(G)$ then the intervals of~$u$ and~$v$ touch. We may assume all interval endpoints are distinct integers between~$1$ and~$2v(G)$.
For each integer~$x$, let its \deff{stabbing set}~$C(x)$ be the set of vertices whose intervals contain~$x$, that is,~$\{v \in V(G) \mid x \in [\ell_v, r_v]\}$.

\newcommand{\K}{\ensuremath{\mathcal{K}}\xspace}

For each~$S \subseteq V(G)$, let the~\deff{$S$-anchored \kcore}~$\K(G, S)$ of~$G$ be its (unique) largest subgraph in which each vertex \emph{not in~$S$} has degree at least~$k$, that is, the~\kcore in which all vertices from~$S$ are \emph{forced}.\footnote{This is essentially the \kcore of a modified graph in which~$S$ is transformed into a clique, with additional vertices so that the clique has at least~$k+1$ vertices.} For any~$\ell$ and~$r$, let~$G[\ell,r]$ be the graph spanned by the intervals intersecting~$[\ell, r]$, that is, the subgraph induced by~$\{u \in V(G) \mid [\ell_u, r_u] \cap [\ell, r] \neq \emptyset\}$. 

\newcommand{\IK}{\textup{\textsc{InnerKCore}}}
\newcommand{\while}{\textbf{while}\xspace}

Let~$\ell, r \in [0, 2v(G)+1]$, $L \subseteq C(\ell)$, and $R \subseteq C(r)$. Define \[\IK(\ell, L, r, R) = \K(G[\ell, r], L \cup R),\] which is essentially the \kcore of~$G$ restricted to~$[\ell, r]$, while ``fixing'' the state of the vertices in the borders~$\ell$ and~$r$, independent of what the rest of~$G$ actually looks like. Note that the~\kcore of~$G$ is~$\IK(0, \emptyset, 2v(G)+1,\emptyset)$.

We aim to efficiently compute~$\IK(\ell,L,r,R)$. If~$r=\ell+1$, then we can directly compute~$\K(G[\ell,r],L\cup R)$ because the graph~$G[\ell,\ell+1]$ contains at most~$2(t+1)$ vertices, where~$t$ is the width of the path decomposition~\P. Otherwise, we pick any~$m \in [\ell+1,r-1]$ and then find an appropriate set~$M \subseteq C(m)$ such that~$\IK(\ell,L,m,M) \cup \IK(m,M,r,R)$ is the desired subgraph. Instead of considering all~$\Oh(2^t)$ possible values of~$M$, we can use the monotonicity of the~\kcore to consider at most~$t+2$ different values of~$M$, by essentially applying a peeling algorithm starting from~$M = C(m)$. This process is formalized as follows.

\begin{algorithm}[h]
    \caption{Compute the anchored \kcore of a graph~$G$ with a given path decomposition.}
    \label{alg:solve}
    \begin{algorithmic}[1]
        \Procedure{\IK}{$\ell, L, r, R$}
            \If{$r = \ell+1$}
                \State \Return $\K(G[\ell,r],L \cup R)$ \label{alg:solve:base-case}
            \EndIf
            \State $m \gets \floor{\frac{\ell+r}{2}}$
            \State $M \gets C(m)$
            \State $K \gets \IK(\ell, L, m, M) \cup \IK(m, M, r, R)$
            \While{$\{u \in M \setminus (L \cup R) \mid d_K(u) < k\} \neq \emptyset$}
                \State $M \gets M \setminus \{u \in M \setminus (L \cup R) \mid d_K(u) < k\}$ \label{alg:solve:del}
                \State $K \gets \IK(\ell, L, m, M) \cup \IK(m, M, r, R)$
            \EndWhile
            \State \Return $K$
        \EndProcedure
    \end{algorithmic}
\end{algorithm}

\begin{proposition}
    \Cref{alg:solve} correctly computes~$\IK(\ell, L, r, R)$.
\end{proposition}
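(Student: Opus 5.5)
We argue by strong induction on $r-\ell$. The base case $r=\ell+1$ is immediate, since line~\ref{alg:solve:base-case} returns $\K(G[\ell,r],L\cup R)$ by definition. For $r>\ell+1$, write $K^* = \IK(\ell,L,r,R) = \K(G[\ell,r],L\cup R)$ and $m=\floor{\frac{\ell+r}{2}}$. By induction the recursive calls return the correct anchored cores. Two structural facts will be used throughout. First, \emph{separation}: $C(m)$ separates $G[\ell,m]$ from $G[m,r]$, because an edge between a vertex whose interval lies in $[\ell,m)$ and one whose interval lies in $(m,r]$ would force the intervals to touch, which is impossible. Moreover, $G[\ell,r]=G[\ell,m]\cup G[m,r]$ and $G[\ell,m]\cap G[m,r]=G[C(m)]$. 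Second, \emph{monotonicity}: $\K(H,S)$ is monotone in $S$, and $S\subseteq \K(H,S)$. The cleanest description is $\K(H,S)$ as the union of all $X\supseteq S$ in which every vertex of $X\setminus S$ has degree at least $k$ in $H[X]$; this union is closed, hence is the maximum such set.

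The key claim is the loop invariant $K^*\cap C(m)\subseteq M$. The plan is to show (a) the invariant holds initially and is preserved, (b) at termination $K\supseteq K^*$, and (c) at termination $K\subseteq K^*$.

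\emph{Invariant.} Initially $M=C(m)$, so it holds. Suppose it holds for the current $M$, and let $K=\IK(\ell,L,m,M)\cup\IK(m,M,r,R)$. Take $X=V(K^*)\cap V(G[\ell,m])$. By the invariant, $X\cap C(m)\subseteq M$, so $X\cup M$ is a valid anchored set in $G[\ell,m]$ with anchors $L\cup M$. Indeed, vertices of $X\setminus(L\cup M)$ lie outside $C(m)$ and outside $R$, so all their $K^*$-neighbours are in $G[\ell,m]$ by separation and hence in $X$. Therefore $X\subseteq \IK(\ell,L,m,M)$, and symmetrically on the right, giving $K^*\subseteq K$ (as induced subgraphs). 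A vertex $u\in M\setminus(L\cup R)$ deleted at line~\ref{alg:solve:del} has $d_K(u)<k$, hence $d_{K^*}(u)<k$, so $u\notin K^*$ because $u$ is not an anchor of $K^*$. Thus the invariant is preserved. The same argument proves (b).

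\emph{Termination and (c).} $M$ strictly shrinks in each iteration, so the loop runs at most $|C(m)|+1\le t+2$ times. At exit, every $u\in M\setminus(L\cup R)$ has $d_K(u)\ge k$. We must check that $K$ is a valid $(L\cup R)$-anchored set in $G[\ell,r]$ and that $K$ is induced, so that $K\subseteq K^*$ by maximality. Consider a vertex $v\in K\setminus(L\cup R)$. If $v\in M$, the loop condition gives $d_K(v)\ge k$. If $v\notin M$, then $v$ lies in one side's core as a non-anchor. If $v\notin C(m)$, its degree there is at least $k$ and it only gains neighbours in the union. The case $v\in C(m)\setminus M$ belongs to both $G[\ell,m]$ and $G[m,r]$, and this is the main obstacle. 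We need that the cores contain no vertex of $C(m)\setminus M$; otherwise it could survive as a non-anchor with degree counted only on one side.

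To handle this, we would strengthen the invariant to state that $V(K)\cap C(m)= M$ in the final iteration, or else refine how $K$ is compared. The first thing to try is to show that such a $v$, having degree at least $k$ in one side's core, has degree at least $k$ in $K$, which suffices for validity anyway. This works because the union of the two cores, as an induced subgraph of $G[\ell,r]$, only adds edges. Hence every non-anchor vertex of $K$ has degree at least $k$ in $K$, and maximality gives $K\subseteq K^*$. Combining (b) and (c) yields $K=K^*$. A remaining small technicality is checking that $L\subseteq C(\ell)$ and $R\subseteq C(r)$ are valid border sets for the recursive calls, where $M\subseteq C(m)$ holds by construction.
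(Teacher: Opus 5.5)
Your proof is correct and is essentially the paper's argument. Your invariant $K^\star\cap C(m)\subseteq M$ is the paper's invariant that $M$ contains the largest valid set $M^\star$, which equals $K^\star\cap C(m)$; your step (c) is the paper's observation that $K(M)\subseteq K^\star$ whenever $M$ is valid. You differ in two small ways: you prove preservation directly from $K^\star\subseteq K(M)$ via the separation by $C(m)$ (which the paper uses implicitly when asserting $K^\star\cap C(m)$ is valid) rather than via monotonicity of $\mathcal{K}$; and the ``main obstacle'' you flag is not one, since, as you end up noting, a non-anchor with degree at least $k$ in one side's core already has degree at least $k$ in the union, so no strengthened invariant or inducedness check is needed.
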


\begin{proof}
    Let~$K^\star = \K(G[\ell, r], L \cup R)$. We want to show this is the value returned by the algorithm.
    The proof is by induction on~$r - \ell$. The base case, when~$r = \ell+1$, is trivial, since line~\ref{alg:solve:base-case} simply computes the correct value~$K^\star$.

    Now, let~$m = \floor{\frac{\ell+r}{2}}$ and note that~$\ell < m < r$. Let~$M \subseteq C(m)$ and
    \begin{align*}
        K(M) &= \IK(\ell, L, m, M) \cup \IK(m, M, r, R) \\
        &= \K(G[\ell,m],L \cup M) \cup \K(G[m, r], M \cup R),
    \end{align*}
    where the second equality follows from the induction hypothesis. We say~$M$ is \emph{valid} if~$d_{K(M)}(u) \geq k$ for all~$u \in M \setminus (L \cup R)$.
    Suppose~$M$ is invalid and~$d_{K(M)}(u)<k$ for some~$u \in M \setminus (L \cup R)$. Note \K is monotone, that is,~$\K(G, S') \subseteq \K(G,S)$ if~$S' \subseteq S$. Therefore,~$d_{K(M')}(u) \leq d_{K(M)}(u) < k$ for each~$M' \subseteq M$, which means no subset of~$M$ is valid if it contains~$u$. Since we start with~$M = C(m)$, which is trivially a superset of all valid sets, and at each iteration of the \while loop we only discard invalid sets by removing vertices from~$M$, eventually the algorithm finds~$M^\star$, the largest valid set. Since~$K^\star \cap C(m)$ is a valid set,~$K(M) \subseteq K^\star$ for every valid set~$M$, and~$\K$ is monotone, we have~$K(M^\star) = K^\star$, and the algorithm correctly computes this value.\qed
\end{proof}

\Cref{alg:solve} is correct but, if implemented naively, it may take~$\Omega(v(G))$ parallel time even if we run both subinstances $\IK(\ell, L, m, M)$ and $\IK(m, M, r, R)$ in parallel with an unbounded number of processors.\footnote{Take the example of a path and~$k=2$. Note that line~\ref{alg:solve:del} only deletes a vertex from~$M$ when its degree is less than~$k$. At any time, only the two endpoints of the path will have degree one, and after their deletion, the graph remains a path. So, there will be at least~$\nicefrac{v(G)}{2}$ sequential executions of line~\ref{alg:solve:del} throughout the execution of the algorithm. They may be in different processors, but they are still sequential.} We now show how to implement it efficiently.

\begin{theorem} \label{res:pw-kcore}
    Given~$G$ and a path decomposition~\P of~$G$ of width~$t$, the~\kcore of~$G$ can be computed in $\Oh(t \lg v(G))$ parallel time on~$\Oh((v(G) + t^2) \cdot v(G) \cdot 2^{2t})$ processors on a \PRAM.
\end{theorem}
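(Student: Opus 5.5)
The plan is to evaluate \Cref{alg:solve} bottom-up over its recursion tree rather than top-down, memoizing all boundary states. The recursion on $[0,2v(G)+1]$ with midpoint splits gives a binary tree $\mathcal{R}$ of intervals $[\ell,r]$. It has $\Oh(v(G))$ nodes and depth $\Oh(\lg v(G))$, and the set of nodes depends only on $v(G)$, not on $L$, $R$ or $M$. For every node $[\ell,r]$ of $\mathcal{R}$ and every pair $L \subseteq C(\ell)$, $R \subseteq C(r)$ we will store the answer $\IK(\ell,L,r,R)$. Since $|C(x)| \leq t+1$, there are $\Oh(2^{2t})$ pairs per node, hence $\Oh(v(G)\,2^{2t})$ table entries. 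Each entry stores two things:
\begin{itemize}
\item the vertex set of the anchored core as a bit vector of length $v(G)$;
\item the degree $d_K(w)$ inside that core of every boundary vertex $w \in C(\ell)\cup C(r)$.
\end{itemize}
We fill the tables level by level, from the leaves ($r=\ell+1$) up to the root. One processor group of size $\Oh(v(G)+t^2)$ is assigned to each entry, which gives the stated processor bound.

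The leaves are handled directly. $G[\ell,\ell+1]$ has at most $2(t+1)$ vertices, so a peeling on it takes $\Oh(t)$ rounds. Each round is a degree test among $\Oh(t)$ vertices using $\Oh(t^2)$ processors. Writing the bit vector costs $\Oh(1)$ time with $v(G)$ processors.

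For an internal node, the processor group for $(\ell,L,r,R)$ simulates the \while loop. The key observation is that, by the correctness argument for the proposition, $M$ only shrinks, starting from $C(m)$. Hence the loop performs at most $t+2$ iterations. Each iteration performs three steps.
\begin{enumerate}
\item Look up the already computed children $\IK(\ell,L,m,M)$ and $\IK(m,M,r,R)$. This is an $\Oh(1)$ table access once $M$ is encoded as an index of $t+1$ bits, which can be maintained in $\Oh(1)$ time.
\item Compute the degrees $d_K(u)$ for $u \in M\setminus(L\cup R)$.
\item Delete the deficient vertices.
\end{enumerate}
This is where the stored boundary degrees are used. Every vertex of $C(m)$ is a boundary vertex of both children, and $G[\ell,m]\cap G[m,r]$ consists exactly of the vertices of $C(m)$. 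Therefore
\[
d_K(u) = d_{K_1}(u) + d_{K_2}(u) - \bigl|N(u)\cap K_1\cap K_2 \cap C(m)\bigr| ,
\]
where $K_1$ and $K_2$ are the two child cores. The correction term is computed with $\Oh(t^2)$ processors in $\Oh(1)$ time for all $u$ simultaneously. We use the CRCW model to OR together the pairwise adjacency tests, and counting up to $t+1$ fits in the $\Oh(t)$ time budget. So each iteration costs $\Oh(t)$ time, and a whole level costs $\Oh(t)$ time. After the loop terminates, we form the union bit vector with $v(G)$ processors in $\Oh(1)$ time. We then produce the boundary degrees for the parent by the same inclusion--exclusion formula. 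The point to check is that a vertex of $C(\ell)$ has a neighbour in the right part only if its interval reaches $m$. In that case it lies in $C(m)$, a boundary of the right child, so its right-side degree is available in the table. The symmetric statement holds for $C(r)$.

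Summing over the $\Oh(\lg v(G))$ levels gives $\Oh(t\lg v(G))$ parallel time on $\Oh((v(G)+t^2)\cdot v(G)\cdot 2^{2t})$ processors. The root entry $\IK(0,\emptyset,2v(G)+1,\emptyset)$ is the \kcore of $G$. Correctness follows from the proposition, since the bottom-up evaluation computes exactly the values the recursion requests. The main obstacle is the degree computation. A vertex of $C(m)$ may have $\Omega(v(G))$ neighbours in $K$, so counting them naively would cost $\Oh(\lg v(G))$ time per iteration and break the $\Oh(t\lg v(G))$ bound. The fix is to carry the boundary degrees in the table and combine them by inclusion--exclusion over $C(m)$, which keeps every iteration within $\Oh(t)$ time and $\Oh(t^2)$ processors besides the bit-vector copy.
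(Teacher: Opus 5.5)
Your scheme is the paper's: bottom-up evaluation of \Cref{alg:solve} over the $\Oh(\lg v(G))$ levels, one processor group per state, at most $t+2$ rounds of the \textbf{while} loop per level because $M$ only shrinks inside $C(m)$, and direct peeling at the leaves. The bookkeeping differs. The paper stores a full degree vector $d_K\in\N^{V(G)}$ per state and gets $d_K$ in each round by one elementwise addition of the children's vectors. You store only boundary degrees and correct by inclusion--exclusion, which rightly accounts for edges inside $C(m)$ lying in both children; the paper's ``adding up two lists'' passes over that point.

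\textbf{The gap is the cost of your correction term.} Reaching $\Oh(t\lg v(G))$ needs $\Oh(t)$ time per level. Since the loop may run $t+2$ times, each round must cost $\Oh(1)$. But $|N(u)\cap K_1\cap K_2\cap C(m)|$ is a sum of up to $t+1$ bits, and you give no $\Oh(1)$-time way to compute it:
\begin{itemize}
\item a concurrent-write OR only tells you whether the sum is nonzero;
\item a balanced summation takes $\Oh(\lg t)$ time;
\item a sequential count takes $\Oh(t)$ time.
\end{itemize}
So ``each iteration costs $\Oh(t)$ time, and a whole level costs $\Oh(t)$ time'' does not add up. As written, your argument gives $\Oh(t^2)$ (or $\Oh(t\lg t)$) per level, hence only $\Oh(t^2\lg v(G))$ (or $\Oh(t\lg t\lg v(G))$) overall. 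The same counting at the leaves is harmless, since it is paid once and $\Oh(t\lg t)\subseteq\Oh(t\lg v(G))$ because $t<v(G)$.

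\textbf{A repair within your scheme.} First, $V(K_1)\cap C(m)=V(K_2)\cap C(m)=M$ throughout the loop, by induction on the rounds:
\begin{itemize}
\item a vertex $x$ removed from $M$ had $d_K(x)<k$ at that moment;
\item from then on $x$ is unforced in both children, and the children's cores only shrink;
\item so if $x$ reappeared in a later child core, it would have at least $k$ neighbours there, hence in the earlier $K$, a contradiction.
\end{itemize}
Hence your correction equals $|N(u)\cap M|$, which depends only on $m$, $M$ and $u$. Precompute it for every split point $m$, every $M'\subseteq C(m)$ and every $u\in C(m)$. This takes $\Oh(\lg t)$ time once, on $\Oh(v(G)\,t^2\,2^{t})$ processors, which fits the stated budget. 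Each round then becomes a lookup, using the same index of $M$ you already use for the children, plus an addition and a comparison. That is $\Oh(1)$ time per round, which gives the $\Oh(t)$ per level you need.
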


\begin{proof}
    By~\Cref{lem:degeneracy-leq-tw}, the case where $k > t$ is trivial, and we may assume~$k \leq t$.

    From~\P we can easily compute intervals~$[\ell_u, r_u]$ for all vertices~$u$ of~$G$ so that if~$uv \in E(G)$ then~$[\ell_u,r_u] \cap [\ell_v,r_v] \neq \emptyset$, just by numbering the nodes of the path~\P from left to right. Furthermore, we may assume all interval endpoints are distinct integers in the~$[1,2v(G)]$ range by ordering and renumbering them in~$\Oh(\lg v(G))$ time on~$\Oh(v(G)^2)$ processors~\cite{jaja_introduction_1992}.

    Consider a directed acyclic graph (DAG) representing the recurrence implicit from~\Cref{alg:solve}. A node~$(\ell, L, r, R)$ may depend on (point to) several different nodes of type~$(\ell, L, m, M)$ and~$(m, M, r, R)$ for different values of~$M \subseteq C(m)$. The depth of this DAG, however, is~$\Theta(\lg v(G))$, because we pick~$m=\floor{\frac{\ell+r}{2}}$, roughly halving the size of the intervals~$[\ell, r]$. At depth~$i$, for a fixed~$M \subseteq C(m)$, all~$[\ell,r]$ intervals intersect only at their endpoints, totaling~$\Oh(v(G))$ intervals.
    Moreover,~$C(\ell)$ and~$C(r)$ each have at most~$2^{t+1}$ subsets; thus, at each depth there are~$\Oh(v(G) \cdot 2^{2t})$ nodes.
    
    Using a bottom-up dynamic programming approach, we can compute the values of~$\IK(\ell, L, r, R)$ in decreasing order of depth, since each execution of~\Cref{alg:solve} depends only on nodes of higher depth. At the highest depth, in line~\ref{alg:solve:base-case}, we can find the desired anchored~\kcore for a fixed node by repeatedly removing all non-forced vertices of degree less than~$k$, which takes~$\Oh(t \lg t)$ parallel time on~$\Oh(t^2 + v(G))$ processors since~$G[\ell,r]$ has at most~$2(t+1)$ vertices and~$\Oh(t^2)$ edges when~$r = \ell + 1$.
    
    For a fixed depth~$i$, assuming all values of~\IK\ for depth $i+1$ are computed, the rest of~\Cref{alg:solve} can be implemented in~$\Oh(t)$ time on~$\Oh(v(G))$ processors. For this, we implement the subgraph~$K$ as a list~$d_K \in \N^{V(G)}$ of the degree of each vertex of~$G$ in~$K$, and~$M$ as a list $\mathbbm{1}_M \in \{0,1\}^{C(m)}$ indicating which values from~$C(m)$ are still in~$M$. At each iteration of the \while loop, $d_K$ is computed in one (parallel) instruction adding up two lists from depth~$i+1$, and each vertex $u \in M$ determines if~$d_K(u) < k$, and, if so, sets~$\mathbbm{1}_M(u)$ to~0. If any vertex was removed from~$M$\footnote{This can be determined with one parallel concurrent write instruction.}, the \while continues.
    
    If we compute all nodes of the same depth in parallel on different processors, this takes~$\Oh(t)$ time on~$\Oh(v(G)^2 \cdot 2^{2t})$ processors per depth, and~$\Oh(t \lg t)$ time on~$\Oh((v(G) + t^2) \cdot v(G) \cdot 2^{2t})$ processors at the highest depth. This totals~$\Oh(t (\lg t + \lg v(G)))$ time on~$\Oh((v(G) + t^2) \cdot v(G) \cdot 2^{2t})$ processors.\qed
\end{proof}

Note there are algorithms in~\NC for obtaining a path decomposition of~$G$ when it has bounded pathwidth~\cite{bodlaender_parallel_1998}; however, they are not in~\NC when~$\pw(G) = \Theta(\lg v(G))$, which is why~\Cref{res:pw-kcore} requires a path decomposition as input.
On interval graphs, however, we can efficiently compute a path decomposition and, analogously to~\Cref{res:chordal-nc}, we can translate this result to interval graphs parameterized by~$k$ even though their pathwidth might be~$\omega(k)$.

\begin{corollary} \label{res:interval-kcore-full}
    Given an interval graph~$G$, its~\kcore can be computed in $\Oh((k + \lg v(G)) \lg v(G))$ parallel time on~$\Oh((v(G) + k^2) \cdot v(G) \cdot 2^{2k})$ processors on a \PRAM.
\end{corollary}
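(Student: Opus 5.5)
We plan to reduce to \Cref{res:pw-kcore} via the construction in the proof of \Cref{res:chordal-g-to-gprime}, but applied globally to all vertices at once. We must compute the whole \kcore, not just decide membership of one vertex $u$.

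\emph{Step 1 (interval model and decomposition).} Since $G$ is an interval graph, we compute an interval representation in \cclass{FL}~\cite{kobler_interval_2011}; this is in $\FNC^2$ and hence within $\Oh(\lg^2 v(G))$ \PRAM time. After sorting the endpoints, the stabbing sets $C(x)$ are the bags of a path decomposition, and the maximal ones are exactly the maximal cliques of $G$. We then mark every vertex lying in some $C(x)$ with $|C(x)|\ge k+1$ as \emph{heavy}. Each heavy vertex lies in a clique of size $k+1$, so it belongs to the \kcore. Marking takes $\Oh(\lg v(G))$ time with $\Oh(v(G)^2)$ processors, using prefix sums or counting per point.

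\emph{Step 2 (trimming heavy regions).} The positions $x$ with $|C(x)|\ge k+1$ form maximal runs on the line; all other positions are light, with stabbing sets of size at most $k$. We build a graph $G'$ by keeping all light vertices together with their neighbourhoods, and by replacing each heavy run with a forced anchor. Concretely, we intend to compute $\K(G'', S)$, where $G''$ is the subgraph induced by the vertices whose intervals meet a light position, and $S$ is the set of heavy vertices among them. The point is the same as in \Cref{res:chordal-g-to-gprime}: a light vertex $v$ has $N_{G''}(v)=N_G(v)$, and heavy vertices are always in the core. Peeling in $G$ therefore only ever deletes light vertices, and their degree counts are identical in $G''$ when the heavy vertices are kept. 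The same deletion-order argument as before shows that the \kcore of $G$ equals the heavy vertices together with the light part of $\K(G'',S)$.

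The main obstacle is controlling the width of $G''$. A light position has $|C(x)|\le k$, but a vertex meeting a light position may stretch into a heavy run, where the bags are large. We will try first to restrict the path decomposition of $G''$ to the light positions plus one boundary position on each side of each heavy run. At a boundary position, the relevant vertices are those shared with the adjacent light bag, together with heavy vertices that meet a light position. The anchoring in \Cref{alg:solve} handles these as forced vertices. We aim for every bag of this restricted decomposition to have size $\Oh(k)$; at the boundary of a run, the two neighbouring light bags give at most $2k$. If this bound is not clean, we fall back on the trimming of \Cref{res:chordal-g-to-gprime}, cutting heavy boundary cliques down to $k+1$ vertices, which gives width at most $k$.

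\emph{Step 3 (run the DP).} We apply \Cref{res:pw-kcore}, with the extension to an initial anchor set $S$, which \Cref{alg:solve} supports directly through the $L\cup R$ mechanism. The width is $t=\Oh(k)$, so the running time is $\Oh(k\lg v(G))$ on $\Oh((v(G)+k^2)v(G)2^{2k})$ processors. Adding the $\Oh(\lg^2 v(G))$ preprocessing time gives the total $\Oh((k+\lg v(G))\lg v(G))$ time. If the width comes out as $ck$ for a constant $c>1$, the processor bound degrades to $2^{2ck}$, so we need to get the width down to $k$ using the trimming argument.
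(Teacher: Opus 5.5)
There is a genuine gap in Steps 2--3: you never obtain a decomposition of width $k$, and the stated processor bound needs it, since width $t$ costs a factor $2^{2t}$ in \Cref{res:pw-kcore}. Take a heavy run between light positions $a$ and $b$. Your boundary-position construction gives a bag $C(a+1)\cap V(G'')\subseteq C(a)\cup C(b)$ there, so the width can reach $2k-1$ and the factor becomes $2^{4k}$, as you note yourself.

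The fallback does not repair this as stated. The trimming in \Cref{res:chordal-g-to-gprime} \emph{deletes} the vertices of $H\setminus L$ from the graph. That is safe there only because $\mathcal{T}'$ is a piece ending at $H$, so those vertices occur in no other node of the instance being solved. In your single global instance, the vertices of that boundary bag outside $C(a)$ are exactly those whose intervals cross the whole heavy run to $b$. They lie in $C(b)$ and are adjacent to every light vertex there, so deleting them changes light neighbourhoods on the far side. Applying the lemma separately for each $u$ instead would cost an extra factor $v(G)$ in processors.

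The missing idea is the paper's:
\begin{itemize}
\item cut the path decomposition at heavy nodes into its maximal subpaths with no internal heavy node;
\item trim each heavy end node to $k+1$ vertices \emph{within its own piece}, keeping the at most $k$ vertices it shares with the adjacent light node;
\item run \Cref{res:pw-kcore} on all pieces in parallel (the total size at most doubles, so the processor count stays $O((v(G)+k^2)\,v(G)\,2^{2k})$);
\item return the union of the pieces' $k$-cores with all heavy vertices.
\end{itemize}
No anchoring is needed, because each trimmed end node is still a $(k+1)$-clique.

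Within your anchoring framework there is also a simpler fix: drop the boundary positions altogether. With every heavy vertex forced, keep only the light positions $x_1<\dots<x_p$ with bags $C(x_i)$, each of size at most $k$. Each interval meets a contiguous set of these positions. Every edge at a light vertex is witnessed at a light position, because a light vertex's interval contains only light positions. The only edges lost join two forced vertices, and those never enter a degree test. This gives width at most $k-1$.

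You would then have to actually prove the anchored variant of \Cref{res:pw-kcore}. \Cref{alg:solve} forces only the border sets $L\subseteq C(\ell)$ and $R\subseteq C(r)$, not an arbitrary set $S$ spread through the interior. So $S$ must be added to the forced set in the base case, excluded from the \textbf{while} test, and the correctness argument rechecked.

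Finally, Step~1 does not give the processor bound either. Going through $\mathsf{FL}\subseteq\mathsf{FNC}^2$ yields $O(\lg^2 v(G))$ time but only an unspecified polynomial number of processors, whereas the statement allows $O(v(G)^2)$ for constant $k$. The paper uses the $O(\lg^2 v(G))$-time, $O(v(G)+e(G))$-processor algorithm of~\cite{klein_efficient_1996} here.
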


\begin{proof}
    Given an interval graph~$G$, start by computing an interval representation for~$G$ in $\Oh(\lg^2 v(G))$ time on~$\Oh(v(G)+e(G))$ processors~\cite{klein_efficient_1996}; that is, computing an interval~$[\ell_u,r_u]$ for each~$u \in V(G)$ such that two vertices are adjacent if and only if their intervals intersect. From this, in $\Oh(1)$ time on~$\Oh(v(G)^2)$ processors, we can easily compute a path decomposition~\P of~$G$ with nodes corresponding to~$C(\ell_u)$ for each~$u \in V(G)$.

    If a node in~\P contains at least~$k+1$ vertices, then they form a large clique that is trivially in the~\kcore, and we say that node is~\deff{heavy}. We then decompose~\P into all its maximal subpaths with no internal heavy node.\footnote{Each heavy node might be in two of the maximal paths, so duplicating them only at most doubles the size of the input.} For each such subpath~$\P'$, we remove vertices contained only in its endpoint nodes until they have at most~$k+1$ vertices. Each modified maximal path~$\P'$ has width~$k$ and we can use~\Cref{res:pw-kcore} to find the~\kcore of its induced subgraph~$G'$ in~$\Oh(k \lg v(G'))$ parallel time on~$\Oh((v(G') + k^2) \cdot v(G') \cdot 2^{2k})$ processors. By combining all~$k$-cores from each maximal path, together with all vertices in heavy nodes of~\P, we compute the~\kcore of~$G$ with the desired complexity.\qed
\end{proof}

\begin{corollary} \label{res:interval-nc}
    For each~$k = \Oh(\lg v(G))$, \problem{I-$k$-core} is in~$\NC^3$.
\end{corollary}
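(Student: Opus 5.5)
The plan is to instantiate \Cref{res:interval-kcore-full} with $k \leq c \lg v(G)$ for some constant $c$, and then convert the resulting \PRAM bound into a circuit bound using \Cref{lem:pram-to-nc}. Write $n = v(G)$ and fix an instance $(G,u,k)$ of \problem{I-$k$-core}, where $G$ is an interval graph and $k = \Oh(\lg n)$. We may also assume $k \leq n$, since otherwise the \kcore is empty. This does not matter here anyway, because $k$ is already logarithmic.

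First we bound the parallel time. \Cref{res:interval-kcore-full} computes the whole \kcore of $G$ in time $\Oh((k+\lg n)\lg n)$. Substituting $k \leq c\lg n$ gives $\Oh((c+1)\lg^2 n) = \Oh(\lg^2 n)$.

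Next we bound the number of processors. The processor count is $\Oh((n+k^2)\cdot n\cdot 2^{2k})$. With $k \leq c\lg n$ we have $2^{2k} \leq n^{2c}$ and $k^2 = \Oh(\lg^2 n)$, so the total is $\Oh(n^{2+2c})$, which is polynomial in $n$. The input size $|x|$ is polynomially related to $n$, since it is at least $n$ and at most $\Oh(n^2 + \lg k)$. Hence the bound is also $\Oh(\lg^2 |x|)$ time on polynomially many processors with respect to $|x|$.

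To answer the decision question, the algorithm outputs the characteristic vector of the \kcore and then checks whether the bit for $u$ is set, which takes $\Oh(1)$ extra time. By \Cref{lem:pram-to-nc} with $i=2$, \problem{I-$k$-core} is in $\NC^3$.

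The main point needing care is uniformity. The circuit family must not depend on an unknown constant, so the guarantee ``$k = \Oh(\lg v(G))$'' has to be fixed beforehand as $k \leq c\lg v(G)$ for a fixed $c$. The circuit for inputs of a given length then hardwires the processor bound $n^{2+2c}$. If an input violates the bound, the circuit can reject, or output the peeling result trivially, since the statement only concerns instances in the promised range.

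A second point is that the interval-representation step inside \Cref{res:interval-kcore-full} already runs in $\Oh(\lg^2 n)$ time. It therefore does not dominate the total time, and no further argument is needed for it.
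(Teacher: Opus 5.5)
Your proposal is correct and matches the paper's proof: substitute $k \leq c\lg v(G)$ into \Cref{res:interval-kcore-full} to get $\Oh(\lg^2 v(G))$ time on $\Oh(v(G)^{2+2c})$ processors, then apply \Cref{lem:pram-to-nc} with $i=2$. Your remarks on input size, uniformity (fixing $c$ in advance) and reading off the bit for $u$ are sound and make explicit details the paper leaves implicit.
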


\begin{proof}
    When~$k \leq c \lg v(G)$ for some constant~$c$, by~\Cref{res:interval-kcore-full} there is an algorithm for~\problem{I-$k$-core} that takes~$\Oh(\lg^2 v(G))$ time and uses~$\Oh(v(G)^{2+2c})$ processors. Therefore, by~\Cref{lem:pram-to-nc} the problem is in~$\NC^3$.\qed
\end{proof}

This algorithm can also be generalized to chordal graphs via centroid decomposition, but remains in \NC only when~$k=\Oh(1)$, which is no stronger than the results from~\Cref{sec:courcelles}, so we omit it.

\renewcommand{\P}{\cclass{P}}

\section{Lower bounds} \label{sec:lower-bounds}

While~\Tcore is known to be~\P-complete in general graphs~\cite{anderson_p-complete_1984}, we just presented more efficient algorithms when restricted to interval graphs, chordal graphs, and graphs of bounded treewidth, that is, we provided upper bounds for this problem on such classes of graphs. We now complement these results by showing lower bounds for the problem on these same graph classes, which clarify the \emph{minimum} required computational resources to solve this problem even in these restricted classes. Consider the following problem.

In the \ORD problem, given a directed path~$P$ and vertices~$u$ and~$v$, the goal is to decide if there is a directed path from~$u$ to~$v$ in~$P$.

\citet{etessami_counting_1997} showed that \ORD is~\L-complete under $\NC^1$ reductions.\footnote{\citeauthor{etessami_counting_1997} actually showed \ORD is complete under \emph{quantifier-free projections}, which are a particular case of~$\NC^1$ reductions. However,~$\NC^1$ reducibility is enough for our results.} That is, if~\ORD can be solved in~$\NC^1$, then~$\L = \NC^1$, which is believed to be false. We extend this result to \Kcore on interval graphs of bounded pathwidth.

\begin{theorem} \label{res:kcore-l-hard-tw}
    For each~$k \geq 2$, \Kcore on interval graphs of pathwidth~$k$ is~\L-hard (under~$\NC^1$ reductions).
\end{theorem}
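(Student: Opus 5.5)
We reduce from \ORD, which is \L-complete under $\NC^1$ reductions~\cite{etessami_counting_1997}. The idea is to turn the directed path $P$ into a ``thick path'' whose \kcore peeling starts at the source and cascades forward until it is stopped by a stabilizing gadget attached at $v$. Then $u$ survives in the \kcore exactly when $u$ comes after $v$. Since \L is closed under complement, this gives \L-hardness: the question ``is there a path from $v$ to $u$'' is itself an \ORD instance with the roles of $u$ and $v$ swapped, and we can also answer the original question.

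\textbf{Construction.} Let $P=(p_1,\dots,p_n)$; the order is \emph{not} known to the reduction, only the successor relation. Let $G_0$ be the $(k-1)$-th power of the underlying path: $p_i p_j$ is an edge iff $1\le |i-j|\le k-1$.
\begin{itemize}
\item $G_0$ is a unit interval graph of pathwidth $k-1$.
\item A pair is within distance $k-1$ iff it is linked by a successor chain of length at most $k-1$. This is a constant-depth composition of the input relation, so $G_0$ is computable in $\NC^1$ for fixed $k$.
\item In $G_0$ the first vertex has degree $k-1$. Deleting it leaves the second vertex with degree $k-1$, and so on. Without anchors the whole graph peels (for $k=2$ this is the path itself).
\end{itemize}
We attach two anchors.
\begin{itemize}
\item At the sink $p_n$, recognized locally as the vertex with no out-edge, attach a clique $Q_n$ of $k$ new vertices all adjacent to $p_n$.
\item At $v$, attach a clique $Q_v$ of $k$ new vertices adjacent to $v$.
\end{itemize}
Each $Q\cup\{x\}$ is a $K_{k+1}$. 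It can be realized as intervals nested inside $x$'s interval, so the result $G$ is still an interval graph. Its pathwidth rises to at most $k$; we will check that attaching the clique inside a single bag of width $k-1$ adds at most one to the width. The output is $(G,u,k)$.

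\textbf{Correctness.} The $K_{k+1}$'s at $v$ and $p_n$ lie in the \kcore.
\begin{itemize}
\item \emph{Vertices strictly between $v$ and $p_n$.} Each has at least $k-1$ neighbours on each side within the segment, or is adjacent to an anchored vertex. Interior degree is $2k-2\ge k$ for $k\ge 2$. So the segment from $v$ to $p_n$ together with both cliques has minimum degree at least $k$ and lies in the \kcore.
\item \emph{Vertices before $v$.} We give an explicit deletion order $p_1,p_2,\dots$. When $p_i$ is considered, its earlier neighbours are gone and it has at most $k-1$ later neighbours. Hence $u$ is not in the \kcore whenever $u$ precedes $v$.
\end{itemize}

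\textbf{Main obstacle.} The delicate point is the boundary behaviour right next to the anchors. The prefix vertices $p_i$ with $i$ close to $v$ have forward neighbours in the surviving part, namely $v$ and its successors. The count ``at most $k-1$ later neighbours'' is exactly tight, so I would verify it carefully; it holds because only neighbours after $p_i$ remain. I would likewise check the degrees of the last $k-1$ vertices before $p_n$.

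If the tight counting breaks for some small $k$, I would fall back to two fixes:
\begin{itemize}
\item replace each anchor by adjacency of $Q$ to the last $k-1$ path vertices of the segment;
\item handle the $O(k)$ vertices near $v$ by a local special case in the reduction, which is still $\NC^1$ since ``$u$ within distance $k$ of $v$'' is locally checkable.
\end{itemize}

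Finally, I would confirm uniformity and $\NC^1$-computability. Every edge decision depends on a bounded-length successor chain and on local tests (is $x=v$, is $x$ the sink), so the reduction is in $\NC^1$.
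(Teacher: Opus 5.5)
\textbf{Gap 1: the output is not an interval graph for $k\ge 3$.} Your route differs from the paper's, and it fails at the step ``It can be realized as intervals nested inside $x$'s interval''. For $k\ge3$ the edge $p_{i-1}p_{i+1}$ is present in $G_0$. So an interior vertex $v=p_i$ has no stretch of its interval that avoids every other path interval, and a pendant clique there destroys interval-ness. Concretely, suppose $i-k\ge1$, $i+k\le n$, and $q\in Q_v$. Then $\{q,p_{i-k},p_{i+k}\}$ is an asteroidal triple:
\begin{itemize}
\item it is independent;
\item the path $p_{i-k},p_{i-1},p_{i+1},p_{i+k}$ avoids $N[q]=Q_v\cup\{p_i\}$;
\item the path $q,p_i,p_{i+k-1},p_{i+k}$ avoids $N[p_{i-k}]$;
\item the path $q,p_i,p_{i-k+1},p_{i-k}$ avoids $N[p_{i+k}]$.
\end{itemize}
Interval graphs are AT-free, so your instances are only chordal, and the reduction does not give the theorem. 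Your pathwidth justification is also wrong as stated: a bag holding $Q_v\cup\{v\}$ inserted between two bags must also contain their $k-1$ common vertices. Width $k$ can be rescued by making $Q_v\cup\{v\}$ the first bag and adding $v$ to every bag to the left of those already containing $v$.

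\textbf{Gap 2: the degree count fails on short segments.} Your claim about the segment from $v$ to $p_n$ is false when that segment is short. The anchors raise only the degrees of $v$ and $p_n$, so being adjacent to an anchored vertex is worth just $1$. Take $v=p_a$ with $2\le n-a\le k-1$. After the prefix is peeled, $\{p_a,\dots,p_n\}$ is a clique on at most $k$ vertices, and each interior vertex has degree $n-a<k$. For example, with $k=3$, $v=p_{n-2}$ and $u=p_{n-1}$, the vertex $u$ is peeled although it is reachable from $v$. Your second fallback (deciding directly when $u$ is within $k-1$ successor steps of $v$) does repair this. But it is necessary, not optional, and nothing in the proposal addresses Gap 1.

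\textbf{How the paper avoids both problems.} It never uses path powers. It proves $k=2$ with the plain path plus two triangles; your $k=2$ instance (triangles at $v$ and at the sink, query $u$) is an equally valid base, whereas the paper puts triangles at $u$ and $v$ and queries the successor $w$ of $u$. It then adds $k-2$ universal vertices. These are intervals covering everything, so the graph stays interval and its pathwidth rises by exactly $k-2$. Every original degree also rises by exactly $k-2$, so the $k$-core of the new graph is the old $2$-core plus the universal vertices.

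Finally, no complement argument is needed: after renaming $u$ and $v$, your map is already a many-one reduction from \ORD.
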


\begin{proof}
    We will show an~$\NC^1$ reduction from~\ORD to~\Kcore. Given a directed path~$P$ and vertices~$u$ and~$v$, let~$w$ be the vertex after~$u$ in~$P$. If~$u=v$, the answer to~\ORD is trivially ``yes'', and if~$u$ is the last vertex, it is ``no''. Create a new graph~$G_2$ starting from the undirected version of~$P$, and adding extra vertices~$u_1, u_2, v_1$, and $v_2$, and edges~$u_1u_2, uu_1, uu_2, v_1v_2, vv_1$, and $vv_2$. Essentially, we are turning~$u$ and~$v$ into vertices of triangles. See~\Cref{fig:ord-to-kcore}.

    \begin{figure}[h]
        \centering
        \begin{tikzpicture}[
    every node/.style={font=\small},
    vtx/.style={circle,draw,inner sep=0.5pt,minimum size=13.5pt},
    hvtx/.style={vtx,fill=black!10},
    darrow/.style={-{Stealth[length=2.0mm]},thick},
    undedge/.style={thick},
    lab/.style={font=\scriptsize,fill=white,inner sep=1pt},
    scale=0.98, transform shape
]

\begin{scope}[shift={(-2.9,0)}]
    \node[font=\small\bfseries] at (0,1.95) {Case 1: $u \to v$ in $P$};

    \node[vtx] (l1) at (-2.25,1.2) {\scalebox{0.8}{$\cdots$}};
    \node[vtx] (u1) at (-1.35,1.2) {$u$};
    \node[hvtx] (w1) at (-0.45,1.2) {$w$};
    \node[vtx] (x1) at (0.45,1.2) {\scalebox{0.8}{$\cdots$}};
    \node[vtx] (v1) at (1.35,1.2) {$v$};
    \node[vtx] (r1) at (2.25,1.2) {\scalebox{0.8}{$\cdots$}};
    \path[every edge/.style={darrow, draw}] 
        (l1) edge (u1)
        (u1) edge (w1)
        (w1) edge (x1)
        (x1) edge (v1)
        (v1) edge (r1);
    \node[lab] at (-2.25,1.68) {$P$};

    \node[lab] at (-2.25,0.48) {$G_2$};
    \node[vtx] (l2) at (-2.25,0.0) {\scalebox{0.8}{$\cdots$}};
    \node[vtx] (u2) at (-1.35,0.0) {$u$};
    \node[hvtx] (w2) at (-0.45,0.0) {$w$};
    \node[vtx] (x2) at (0.45,0.0) {\scalebox{0.8}{$\cdots$}};
    \node[vtx] (v2) at (1.35,0.0) {$v$};
    \node[vtx] (r2) at (2.25,0.0) {\scalebox{0.8}{$\cdots$}};
    \draw[undedge] (l2) -- (u2) -- (w2) -- (x2) -- (v2) -- (r2);

    \node[vtx] (uu1) at ([shift={(-0.42,-0.65)}]u2) {$u_1$};
    \node[vtx] (uu2) at ([shift={(0.42,-0.65)}]u2) {$u_2$};
    \draw[undedge] (u2) -- (uu1) -- (uu2) -- (u2);

    \node[vtx] (vv1) at ([shift={(-0.42,-0.65)}]v2) {$v_1$};
    \node[vtx] (vv2) at ([shift={(0.42,-0.65)}]v2) {$v_2$};
    \draw[undedge] (v2) -- (vv1) -- (vv2) -- (v2);
\end{scope}

\begin{scope}[shift={(2.9,0)}]
    \node[font=\small\bfseries] at (0,1.95) {Case 2: no $u \to v$ in $P$};

    \node[vtx] (l1b) at (-2.25,1.2) {\scalebox{0.8}{$\cdots$}};
    \node[vtx] (v1b) at (-1.35,1.2) {$v$};
    \node[vtx] (y1b) at (-0.45,1.2) {\scalebox{0.8}{$\cdots$}};
    \node[vtx] (u1b) at (0.45,1.2) {$u$};
    \node[hvtx] (w1b) at (1.35,1.2) {$w$};
    \node[vtx] (z1b) at (2.25,1.2) {\scalebox{0.8}{$\cdots$}};
    \path[every edge/.style={darrow, draw}] 
        (l1b) edge (v1b)
        (v1b) edge (y1b)
        (y1b) edge (u1b)
        (u1b) edge (w1b)
        (w1b) edge (z1b);
    \node[lab] at (-2.25,1.68) {$P$};

    \node[lab] at (-2.25,0.48) {$G_2$};
    \node[vtx] (l2b) at (-2.25,0.0) {\scalebox{0.8}{$\cdots$}};
    \node[vtx] (v2b) at (-1.35,0.0) {$v$};
    \node[vtx] (y2b) at (-0.45,0.0) {\scalebox{0.8}{$\cdots$}};
    \node[vtx] (u2b) at (0.45,0.0) {$u$};
    \node[hvtx] (w2b) at (1.35,0.0) {$w$};
    \node[vtx] (z2b) at (2.25,0.0) {\scalebox{0.8}{$\cdots$}};
    \draw[undedge] (l2b) -- (v2b) -- (y2b) -- (u2b) -- (w2b) -- (z2b);

    \node[vtx] (uu1b) at ([shift={(-0.42,-0.65)}]u2b) {$u_1$};
    \node[vtx] (uu2b) at ([shift={(0.42,-0.65)}]u2b) {$u_2$};
    \draw[undedge] (u2b) -- (uu1b) -- (uu2b) -- (u2b);

    \node[vtx] (vv1b) at ([shift={(-0.42,-0.65)}]v2b) {$v_1$};
    \node[vtx] (vv2b) at ([shift={(0.42,-0.65)}]v2b) {$v_2$};
    \draw[undedge] (v2b) -- (vv1b) -- (vv2b) -- (v2b);
\end{scope}

\end{tikzpicture}
        \caption{Reduction in \Cref{res:kcore-l-hard-tw}. Vertex $w$ (shaded) is in the $2$-core of $G_2$ exactly in Case~1, that is, when there is a directed path from $u$ to $v$ in $P$.}
        \label{fig:ord-to-kcore}
    \end{figure}

    Note that the only cycles in~$G_2$ are these two triangles, so~$w$ is in the~$2$-core of~$G_2$ if and only if it has internally disjoint paths to the two triangles. That is the case only if~$w$ is in the (undirected) path between~$u$ and~$v$, which corresponds in~$P$ either to the directed path from~$u$ to~$v$, which would visit~$w$, or from~$v$ to~$u$, which would not. Thus,~$w$ is in the~$2$-core of~$G_2$ if and only if~$u$ has a directed path to~$v$ in~$P$.

    Note~$G_2$ is an interval graph of pathwidth~$2$. Now create~$G_k$ from~$G_2$ by adding~$k-2$ new vertices, each connected to all the other vertices of~$G_k$. Hence,~$w$ is in the~\kcore of~$G_k$ if and only if~$u$ has a directed path to~$v$ in~$P$. The graph~$G_k$ remains an interval graph with pathwidth~$k$.\footnote{Informally, you may consider each new vertex as a new interval in~$G_k$ intersecting all the other intervals, and for a path decomposition of~$G_k$ you can add each new vertex to all nodes of a path decomposition of~$G_2$.} This reduction from~$(P,u,v)$ to~$(G_k,w)$ is trivially in~$\NC^1$, finishing the proof.\qed
\end{proof}

\begin{corollary} \label{res:kcore-para-l-hard}
    \problem{$\p_{\tw}$-$k$-core},~\problem{$\p_k$-C-$k$-core}, and~\problem{$\p_k$-I-$k$-core} are~\para{L}-hard (under $\paraNC^1$ reductions).
\end{corollary}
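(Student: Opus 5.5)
The plan is to derive the corollary directly from \Cref{res:kcore-l-hard-tw} by fixing the parameter to a constant slice. That theorem gives, for $k=2$, an $\NC^1$ reduction from \ORD to \Kcore whose output instances $(G_2,w)$ are interval graphs of pathwidth~$2$. Interval graphs are chordal, and treewidth is at most pathwidth. Hence every produced instance is simultaneously:
\begin{itemize}
  \item a valid instance of \problem{$\p_k$-I-$k$-core} and \problem{$\p_k$-C-$k$-core} with parameter value $k=2$;
  \item a valid instance of \problem{$\p_{\tw}$-$k$-core} with parameter value $\tw(G_2)\le 2$.
\end{itemize}
So the same map works for all three problems at once, and in every case the parameter of the output is bounded by an absolute constant.

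Next I would recall the standard fact used in parameterized space complexity (as in \citet{elberfeld_space_2015}). If a parameterized problem has a single slice with constant parameter value that is \L-hard under $\NC^1$ reductions, then the problem is \para{L}-hard under $\paraNC^1$ reductions. The argument runs as follows.
\begin{enumerate}
  \item Take any $(Q,\kappa)\in\para{L}$. By definition, there is a computable $f$ and $A\in\L$ with $x\in Q \iff (x,1^{f(\kappa(x))})\in A$.
  \item Since $A\in\L$ and \ORD is \L-complete under $\NC^1$ reductions~\cite{etessami_counting_1997}, $A$ reduces to \ORD in $\NC^1$.
  \item Compose with the reduction of \Cref{res:kcore-l-hard-tw}. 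Mapping $x$ to the image of $(x,1^{f(\kappa(x))})$ gives a circuit of depth $\Oh(\lg(|x|+f(\kappa(x))))\le \Oh(f(\kappa(x))+\lg|x|)$ and size $f(\kappa(x))\,|x|^{\Oh(1)}$.
\end{enumerate}
That is exactly a $\paraNC^1$ reduction. Its output has parameter at most~$2$, which is trivially bounded by a computable function of $\kappa(x)$, as a parameterized reduction requires.

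The main obstacle is this bookkeeping in the second paragraph. The definitions of parameterized reductions must be checked carefully: the padding $1^{f(\kappa(x))}$ must be producible by a uniform circuit of the allowed depth and size, and the output parameter must be bounded. Both hold because $f$ is computable (so the padding length depends only on the parameter) and the output parameter is the constant~$2$.

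Once these are verified, each of the three problems is \para{L}-hard, completing the proof.
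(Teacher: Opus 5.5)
Your proposal is correct and follows the route the paper implicitly takes. The paper gives no separate proof: the $k=2$ instances produced in \Cref{res:kcore-l-hard-tw} are interval (hence chordal) graphs of pathwidth (hence treewidth) at most~$2$, so one constant slice of each of the three problems is \L-hard, and the standard slice-to-para argument, which you spell out correctly, yields \para{L}-hardness. The padding concern you raise is settled by the paper's definition of para-classes itself, since the string $1^{f(\kappa(x))}$ is supplied as the precomputed second argument and never has to be generated by the circuit.
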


\section{Conclusion}\label{sec:conclusions}

We have shown that~\Kcore is in~$\paraNCeps$ (and \XL) when parameterized by treewidth, and in $\paraNC^3$ when parameterized by~$k$ on chordal graphs. We also showed an algorithm for~\Kcore which is in~$\NC^3$ on interval graphs when~$k = \Oh(\lg v(G))$, or on graphs of pathwidth~$\Oh(\lg v(G))$ when given a path decomposition. The results are summarized in~\Cref{tab:summary}. To improve the upper bounds, it seems necessary to improve (or bypass) the tree decomposition algorithms, as they are the current bottleneck.

Our results highlight the contrast between \Kcore and \NEkcore. In a chordal graph~$G$, \problem{NEC-$k$-core} is in~\NC because the~\kcore is nonempty if and only if~$k < \omega(G)$, which can be computed efficiently. However, identifying all vertices in the~\kcore remains open for chordal graphs with~$k = \omega(1)$ and interval graphs with~$k = \omega(\lg v(G))$.


Another open area is planar graphs, where degeneracy is at most~$5$. While~\Cref{res:tw-kcore} provides an~\NC algorithm on~$\ell$-outerplanar graphs with bounded~$\ell$, the complexity remains open on general planar graphs for $3 \leq k \leq 5$. Existing \P-hardness reductions~\cite{anderson_p-complete_1984,couto_hardness_2025} do not carry over to planar graphs, and planarizing gadgets cannot be used~\cite{couto_hardness_2025_full}.

Finally, several variants such as approximations, directed versions, and~$k$-truss are~\P-complete~\cite{couto_hardness_2025,anderson_p-complete_1984}, and parameterized algorithms similar to the ones presented in this paper are likely possible.

\bibliography{references}
\end{document}